\newtheorem{theorem}{Theorem}[section]
\newtheorem{lemma}[theorem]{Lemma}
\newtheorem{proposition}[theorem]{Proposition}
\newtheorem{corollary}[theorem]{Corollary}
\newenvironment{definition}[1][Definition]{\begin{trivlist}
\item[\hskip \labelsep {\bfseries #1}]}{\end{trivlist}}
\newcommand{\col}{\operatorname{Col}}
\newcommand{\addstepcomm}{\phi^{\mathrm {Array}}_{\mathrm{Comm}}}
\newcommand{\strip}{\phi_{\mathrm{Strip}}}
\newcommand{\cut}{\operatorname{Cut}}
\newcommand{\update}[1]{{#1}}
\begin{document} 
\pagestyle{plain}
\title{Towards Verifying Nonlinear Integer Arithmetic}
\author{Paul Beame\thanks{Research supported by NSF grants CCF-1524246 and SHF-1714593.}\\
{\small Computer Science and Engineering}\\
{\small University of Washington}\\
{\small beame@cs.washington.edu}
\and Vincent Liew$^*$\\
{\small Computer Science and Engineering}\\
{\small University of Washington}\\
{\small vliew@cs.washington.edu}}

\maketitle 

\begin{abstract}
We eliminate a key roadblock to efficient verification of
nonlinear integer arithmetic using CDCL SAT solvers, by showing how to construct
short resolution proofs for many properties of the most widely used multiplier
circuits.
Such short proofs were conjectured not to exist.
More precisely, we give $n^{O(1)}$ size regular resolution proofs for
arbitrary degree 2 identities on array, diagonal, and Booth multipliers and
$n^{O(\log n)}$ size proofs for these identities on Wallace tree multipliers.
\end{abstract}

\section{Introduction}

The last few decades have seen remarkable advances in our ability to verify hardware and software.
Methods for hardware verification based on Ordered Binary Decision
Diagrams (OBDDs) developed in the 1980s for hardware equivalence
testing~\cite{bryant86} were extended in the 1990s to produce general methods for symbolic
model checking~\cite{bclmd94:modelchecking} to verify complex correctness properties of designs.  More recently,
several orders of magnitude of improvements in the efficiency of SAT
solvers have brought new vistas of verification of hardware and software
within reach.

Nonetheless, there is an important area of formal verification where
roadblocks that were identified in the 1980s still remain:
verification of data paths within designs for Arithmetic Logic Units (ALUs),
or indeed any verification problem in hardware or software that involves the
detailed properties of nonlinear arithmetic. 
Natural examples of such verification problems in software include computations
involving hashing or cryptographic constructions.
At the highest level of abstraction, nonlinear arithmetic over the integers is
undecidable, but the focus of these verification problems is on the
decidable case of integers of bounded size, which is naturally described in the
language of bit-vector arithmetic~(see, e.g. \cite{decision-procedures-book,DBLP:journals/mst/KovasznaiFB16}).

In particular, a notorious open problem is that of verifying properties of 
integer multipliers in a way
that both is general purpose and avoids exponential scaling in the bit-width.
Bryant~\cite{bryant91} showed that this is impossible using OBDDs since
they require exponential size in the bit-width just to represent the middle
bit of the output of a multiplier.   This lower bound has been 
improved~\cite{DBLP:journals/iandc/Bollig11} and extended
to include very tight exponential lower bounds
for much more general diagrams than OBDDs, including
FBDDs~\cite{pon95a,bw:readoncemult} and general
bounded-length branching programs~\cite{sw:multiplyRAM}.
With the flexibility of CNF formulas, efficient representation of multipliers is
no longer a problem but, even with the advent of greatly improved SAT solvers,
there has been no advance in verifying multipliers beyond exponential scaling.

One important technique for verifying software and hardware that
includes multiplication has been to use methods of uninterpreted functions
to handle multipliers~(see \cite{DBLP:conf/cav/BruttomessoCFGHNPS07,decision-procedures-book}) -- essentially converting them to black boxes and hoping
that there is no need to look inside to check the details.  Another important
technique has been to observe that it is often the case that one input to a
multiplier is a known constant and hence the resulting computation involves
linear, rather than nonlinear arithmetic.
These approaches have been combined with theories of arithmetic~(e.g. \cite{DBLP:conf/vlsid/BrinkmannD02,DBLP:conf/dac/ParthasarathyICW04,DBLP:conf/cav/BruttomessoCFGS08,DBLP:conf/tacas/BrummayerB09}), including preprocessors that do some form of rewriting to eliminate nonlinear arithmetic,
but these methods are not able, for example, 
to check the details of a multiplier implementation or handle 
nonlinearity.

Though the above approaches work in some contexts, they are very limited.
The approach of verifying code with multiplication using
uninterpreted functions is particularly problematic for 
hashing and cryptographic applications. For example, using uninterpreted
functions in the actual hash function computation inherently can never
consider the case that there is a hash collision, since it only can infer
equality between terms with identical arguments.  
Concern about the correctness of the arithmetic in such applications is real: for example, longstanding errors in 
multiplication in OpenSSL have recently come to light~\cite{openssl:montgomerybug}.

Recent presentations at verification conferences and workshops have highlighted
the problem of verifying nonlinear arithmetic, and multipliers in particular,
as one of the key gaps in our current verification methods~\cite{DBLP:conf/fmcad/Biere14,Biere:BIRS-TheoryApplicationsSAT-2014,DBLP:conf/fmcad/Kalla15,Biere:Fields-TheoryPracticeSAT-2016}.

Since bit-vector arithmetic is not itself a representation in Boolean variables,
in order to apply SAT solvers to verify
the designs, one must convert implementations and specifications to CNF
formulas based on specified bit-widths. 
The process by which one does this is called {\em flattening}~\cite{decision-procedures-book},
or more commonly {\em bit-blasting}.  
The resulting CNF formulas are then sent to the SAT solvers.   
While the resulting bit-blasted CNF formulas for a
multiplier may grow quadratically with the bit-width, this growth is not
a significant problem.
On the other hand,  a major stumbling block for handling even modest bit-widths
is the fact that existing SAT solvers run on these formulas experience
exponential blow-up as the bit-width increases.  
This is true even for the best of recent methods, e.g., Boolector~\cite{DBLP:conf/tacas/BrummayerB09}, MathSAT~\cite{DBLP:conf/cav/BruttomessoCFGS08}, STP~\cite{DBLP:conf/cav/GaneshD07}, Z3~\cite{DBLP:conf/tacas/MouraB08}, and Yices~\cite{deMoura:Yices}.

In verifying a multiplier circuit one could try to compare it to a reference circuit that is known to be correct. 
This introduces a chicken-and-egg problem: how do we know that the reference circuit is correct?
Another approach to verifying a multiplier circuit is to check that it
satisfies the right properties. 
A correct multiplier circuit must obey the multiplication identities for a
commutative ring. 
If we check that each of these \emph{ring identities} 
holds then the multiplier cannot have an error.
This approach has the advantage that the specification of a multiplier
circuit can be written \emph{a priori} in terms of its natural properties,
rather than in terms of an external reference circuit.  

Empirically, however, modern SAT-solvers perform badly using either
approach to problems of multiplier verification. 
Biere, in the text accompanying benchmarks on the ring identities submitted to 
the 2016 SAT Competition~\cite{Biere-SAT-Competition-2016-benchmarks} 
writes that when given as CNF formulas, no known technique is capable of
handling bit-width larger than 16 for commutativity or associativity of
multiplication or bit-width 12 for distributivity of multiplication over
addition.
These observations lead to the question: is the difficulty inherent in these
verification problems, or are modern SAT-solvers just using the wrong
tools for the job?  

\begin{sloppypar}
Modern SAT-solvers are based on a paradigm called
conflict-directed clause-learning (CDCL)~\cite{mlm09:cdcl-sat-handbook} which can be seen as a way of breaking
out of the backtracking search of traditional DPLL solvers~\cite{DPLL62}. 
When these solvers confirm the validity of an identity (by not finding a
counterexample), their traces yield \emph{resolution} proofs~\cite{beamekautzsabharwal04} of that identity.
The size of such a proof is comparable to the running time of the solver;
hence finding short resolution proofs of these identities is a necessary
prerequisite for efficient verification via CDCL solvers.
Although it is not known whether CDCL solvers are capable of
efficiently simulating every resolution proof, all cases where short resolution
proofs are known have also been shown to have short CDCL-style traces~(e.g., \cite{DBLP:journals/lmcs/BussHJ08,bb2012:poolres,bk2014:stonepool}). 
\end{sloppypar}

The extreme lack of success of general purpose solvers (in particular CDCL
solvers) for verifying any non-trivial
properties of bit-vector multiplication, recently led Biere to
conjecture~\cite{Biere:Fields-TheoryPracticeSAT-2016} that
there is a fundamental proof-theoretic obstacle to succeeding on such problems;
namely, verifying ring identities for multiplication circuits, such as 
commutativity, requires resolution proofs that are exponential in the bit-width
$n$.

We show that such a roadblock to efficient verification of nonlinear arithmetic
does not exist by giving a general method for finding short resolution proofs
for verifying \emph{any} degree 2 identity for Boolean circuits
consisting of bit-vector adders and multipliers.
\update{This method is based on reducing the multiplier verification to
finding a resolution refutation of one of a number of narrow
\emph{critical strips}.}
We apply this method to a number of the most widely used multiplier circuits,
yielding $n^{O(1)}$ size proofs for array, diagonal, and Booth multipliers,
and $n^{O(\log n)}$ size proofs for Wallace tree multipliers.

These resolution proofs are of a special simple form: 
they are \emph{regular} resolution proofs\footnote{Some of these proofs are
even more restricted {\em ordered} resolution proofs, also known as {\em DP}
proofs, which are associated with the original Davis-Putnam
procedure~\cite{dp:prover}.  In contrast to the Davis-Putnam procedure,
which eliminates variables one-by-one keeping all possible resolvents, 
ordered resolution (or DP) proofs only keep some minimal subset of these
resolvents needed to derive a contradiction.}.
Regular resolution proofs have been identified
in theoretical models of CDCL solvers as one of the simplest kinds of proof that
CDCL solvers naturally express~\cite{DBLP:journals/lmcs/BussHJ08}. 
Indeed, experience to date has
been that the addition of some heuristics to CDCL suffices to find short 
regular resolution proofs that we know exist.   
The new regular resolution proofs that we produce are a key
step towards developing such heuristics for verifying general nonlinear
arithmetic.

\paragraph{Related work} 
\update{SAT solver-based techniques used in conjunction with case splitting
previously were shown to achieve some success for multiplier verification in
the work of Andrade et al.~\cite{DBLP:conf/ddecs/AndradeOFC07} improving
on earlier work~\cite{DBLP:conf/dac/AnderssonBCH02,DBLP:conf/date/RedaS01}
which combined SAT solver and OBDD-based ideas for multiplier verification
among other applications; however, there was no general understanding of when
such methods will succeed.}

\begin{sloppypar}
Recently, two alternative approaches to multiplier verification
have been considered: Kojevnikov~\cite{basolver} designed a mixed 
Boolean-algebraic solver, BASolver, that takes input CNF formulas in standard
format.  It uses algebraic rules on top of a DPLL solver. 
Though it can verify the equivalence of multipliers up to 32 bits in a 
reasonable time, in each instance it requires human input in order to find 
a suitable set of algebraic rules to help the solver.
An alternative approach using Groebner basis algorithms has been 
considered~\cite{DBLP:conf/date/Sayed-AhmedGKSD16}.   This is a purely 
algebraic approach based on polynomials.   Since the language of
polynomials allows one to explicitly write down the algebraic specification
for an $n$-bit multiplier, the verification problem is conveniently
that of checking that the multiplier circuit computes a polynomial equivalent
to the multiplier specification.
\cite{DBLP:conf/date/Sayed-AhmedGKSD16} shows that Groebner basis algorithms
can be used to verify 64-bit multipliers in less than ten minutes and 
128-bit multipliers in less than two hours.
One drawback of algebraic methods is they require
that the multipliers be identified and treated entirely
separately from the rest of the circuit or software. 
Unfortunately, for the non-algebraic parts of circuits,
Groebner basis methods can only handle problems several orders of magnitude
smaller than can be handled by CDCL SAT-solvers and it remains to be seen
whether it is possible to combine these to obtain effective 
verification for a general purpose software with nonlinear arithmetic
or circuits that contain a multiplier as just one component of
their design.
In contrast, CDCL SAT solvers are already very effective for the non-algebraic
aspects of circuits and are well-suited to handling the combination of different
components; our work shows that there is no inherent limitation preventing
them from being effective for verification of general purpose nonlinear
arithmetic.

Finally, independently of and in parallel with our results, there has also been further work on refining Groebner basis methods~\cite{bkr17}.   We postpone discussing that refinement until after we have presented our results.
\end{sloppypar}

\paragraph{Roadmap:} 

Section~\ref{sec:array} gives our polynomial size regular resolution proofs for array multipliers. Section~\ref{sec:diagonal} describes how to extend these ideas to obtain short proofs for diagonal and Booth multipliers. Section~\ref{sec:wallace} gives our quasipolynomial size regular resolution proofs for Wallace tree multipliers. 

\section{Notation and Preliminaries}
\label{sec:notation}

We represent Boolean variables in lowercase and
denote clauses by uppercase letters and think of them as sets of literals,
 for example $C=\{x,\bar{y},z\}$. 
 We will work with length $n$ \emph{bit-vectors} of variables, 
 denoted by ${\bf z} = z_{n-1}\ldots z_1 z_0$. When applicable, we will label arithmetic circuits by their output bitvector. For example, a multiplier with inputs $\mathbf{x,y}$ will be labeled $\mathbf{xy}$.

We consider identities from the commutative ring of integers $\mathbb{Z}$.
A variable assignment is denoted by a set $\sigma = \sigma(x_0, x_1 \ldots x_n) $ $= \{x_0 = b_0, x_1=b_1 \ldots x_n=b_n \}$,
where each $b_i \in \{0,1\}$.
$x_0,x_1,\ldots x_n$.
\begin{definition}
\label{def:ring}
A \emph{commutative ring} $(\mathcal{R},\oplus,\otimes,0,1)$ consists of a nonempty set 
$\mathcal{R}$ with addition ($\oplus$) and multiplication ($\otimes$) operators that
satisfy the following properties:
\begin{enumerate}
	\item ($\mathcal{R}$,$\oplus$) is associative and commutative and its identity element 
	is $0$.
	\item For each $\mathbf{x} \in \mathcal{R}$ there exists an \emph{additive inverse}.
	\item ($\mathcal{R}$,$\otimes$) is associative and commutative and its identity element 
	is $1 \not= 0$. 
	\item (distributivity) For all $\mathbf{x},\mathbf{y},\mathbf{z} \in \mathcal{R}$, 
	$\mathbf{x} \otimes (\mathbf{y} \oplus \mathbf{z}) = (\mathbf{x} \otimes \mathbf{y}) \oplus (\mathbf{x} \otimes \mathbf{z})$.
\end{enumerate}
A \emph{ring identity} $L = R$ denotes a pair of expressions $L,R$ that can be transformed into each other using commutativity, distributivity and associativity.
\end{definition}

\update{Note that both verifying integer $\oplus$ circuits and verifying that
$\mathbf{x} \otimes \mathbf{1} = \mathbf{x}$ are easy in practice, so
verifying an integer multiplier circuit $\otimes$ can be easily reduced to
verifying its distributivity.}

\begin{definition}
A {\em resolution proof} consists of a sequence of clauses, each of which is
either a clause of the input formula $\phi$, or follows from two prior
clauses via the {\em resolution rule} which produces clause $C\vee D$ from
clauses $C\vee x$ and $D\vee \overline{x}$. We say that this inference
\emph{resolves} the clauses \emph{on} $x$.
The proof is a {\em refutation} of $\phi$ if it ends with the empty
clause $\bot$.
(With resolution we will use the terms ``proof'' and ``refutation'' interchangeably, since resolution provides proofs of unsatisfiability.)
\end{definition}

We can naturally represent a resolution proof $P$ as a directed acyclic graph
(DAG) of fan-in 2, with $\bot$ labelling the lone sink node.
{\em Tree resolution} is the special subclass of resolution proofs where the
DAG is a directed tree.
Another restricted form of resolution is {\em regular resolution}:
A resolution refutation is \emph{regular} iff on any path in its DAG the
inferences resolve on each variable at most once.
The shortest tree resolution proofs are always regular.
An \emph{ordered} resolution refutation is a regular resolution refutation
 that has the further property that the order in which variables are resolved on
along each path is consistent with a single total order of all variables.
This is a very significant restriction and indeed the shortest tree resolution
proofs do not necessarily have this property.

We will find it convenient to express our regular resolution proofs in the form 
of a \emph{branching program} that solves the \emph{conflict clause search problem}.

\begin{definition}
Suppose that $\phi$ is an unsatisfiable formula. Then every assignment $\sigma$
 to its variables conflicts with some clause in $\phi$. 
 The \emph{conflict clause search} problem is to map any assignment to some corresponding conflicting clause.
\end{definition}

\begin{definition}
A branching program $B$ on the Boolean variables $X = \{x_0,x_1,\ldots\}$ 
and output set $\phi$ (typically a set of clauses in this paper) is a finite
 directed acyclic graph with a unique source node and sink nodes 
 at its leaves, each leaf labeled by an element from $\phi$. Each non-sink node is 
 labeled by a variable from $X$ and has two outgoing edges, one labeled $0$ and 
 the other labeled $1$. An assignment $\sigma$ \emph{activates} an 
 edge labeled $b \in \{0,1\}$ outgoing from a node labeled by the variable 
 $x_i$ if $\sigma$ contains the assignment $x_i = b$. 
 If $\sigma$ activates a path from the source to a sink 
 labeled $C \in \phi$, we say that the branching program $B$ \emph{outputs} $C$.


A \emph{read-once branching program} (also known as a Free Binary Decision 
Diagram, or FBDD) is a branching program where each variable is read at most 
once on any path from source to leaf. 
An \emph{Ordered Binary Decision Diagram (OBDD)} is a special case of an FBDD
in which the variables read along any path are consistent with a single total
order.
\end{definition}

The general case of the following proposition connecting regular resolution proofs
and conflict clause search is due to Krajicek~\cite{krajicek:book}; the 
special case connecting ordered resolution and OBDDs for the conflict clause
search problem was first observed in~\cite{lnnw95}.  We include its proof for completeness.

\begin{proposition}
Let $\phi$ be an unsatisfiable formula. A regular resolution refutation $R$ for 
$\phi$ of size $s$ corresponds to a size $s$ read-once branching program that 
solves the conflict clause search problem for $\phi$.

Suppose that $B$ is a read-once branching program of size $s$ solving the 
conflict clause search problem for $\phi$. 
Then there is a regular resolution refutation for $\phi$ of size $s$.

Furthermore, if $R$ is an ordered resolution refutation then the resulting 
branching program is an OBDD and if $B$ is an OBDD then the resulting resolution
refutation is an ordered resolution refutation.
\label{thm:prover_adv}
\end{proposition}

\begin{figure}[t]
  \centering 
\includegraphics[scale = 0.33]{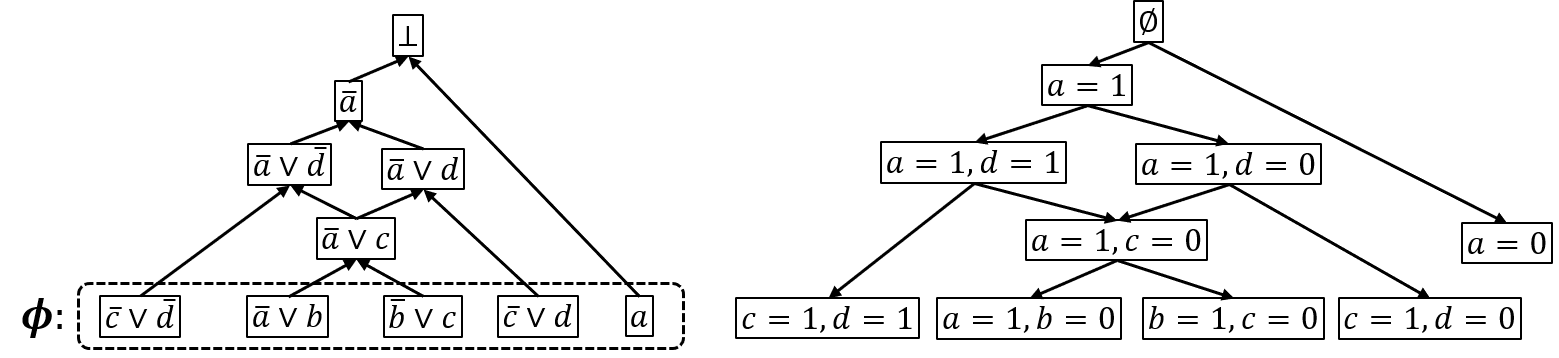}
\caption{A regular resolution refutation for $\phi$ and the corresponding branching program.}
\label{fig:resolutionBP}
\end{figure}

\update{
\begin{proof} 
Suppose that $R$ is a regular resolution refutation of size $s$ for $\phi$. 
Each clause $C$ appearing in $R$ is a node of $B$.
If two clauses $C_0 \vee x,C_1 \vee \bar{x}$ in $R$ resolve on a variable $x$ 
to produce the clause $C$, then in the branching program $B$ we branch from the 
node $C$ on the variable $x$ to reach $C_0 \vee x$ on the $x=0$ branch, and 
$C_1 \vee \bar{x}$ on the $x=1$ branch. The resulting branching program $B$ 
solves the conflict clause search problem for $\phi$ and has the same size 
as the refutation $R$.
The fact that no variable is branched on more than once on any path is immediate 
from the definition; the fact that this results in an OBDD in the case of
ordered resolution is also immediate.

In the other direction, we obtain a regular refutation $R$ from the specified 
read-once branching program $B$. We will label each node $v$ with the maximal clause $C_v$ that is falsified by
every assignment reaching $v$. These clauses form the regular resolution refutation.
If $v$ is a leaf then $C_v$ is the conflicting clause from $\phi$ found by $B$.
If $B$ branches from node $v$ on a variable $x$ to nodes $v_0,v_1$, then 
 in $R$ we resolve the clauses $C_{v_0},C_{v_1}$ on $x$ to obtain $C_v$. Again, the 
 number of clauses in the refutation $R$ is the same as the number of nodes 
 in the branching program $B$.
The fact that the resolution is regular follows immediately from the fact
that the branching program is read-once; if the branching program is an OBDD
then it is immediate that the resolution refutation is ordered.
\end{proof}}

In our proofs we represent each clause with the partial assignment it forbids. For example we write the clause $x \vee \bar{y}$ as the partial assignment $\{x = 0, y=1\}$.  
A branching program for conflict clause search in $\phi$ consists of three types of action, shown in Figures~\ref{fig:branch},~\ref{fig:propagate},~\ref{fig:merge}. At a node labeled by an assignment $\sigma \not\ni z$, we \emph{branch} on the variable $z$ by connecting a child node with assignment $\sigma \cup \{z=0\}$ using a $0$-labeled edge, and another child node $\sigma \cup \{z=1\}$, connected by a $1$-labeled edge. In the case that one of these children has an assignment conflicting with a clause $C \in \phi$, we say that we \emph{propagated} the assignment $\sigma$ to the other child's assignment. Lastly, for a set of leaf nodes with assignments $\sigma_0,\sigma_1,\ldots$ we can \emph{merge} their branches based on a common assignment $\sigma \subseteq \cap_i \sigma_i$ by replacing these nodes with a single node labeled by $\sigma$. 

\begin{figure}[t]
\centering
\begin{minipage}{.45\textwidth}
\centering
\includegraphics[width=\linewidth]{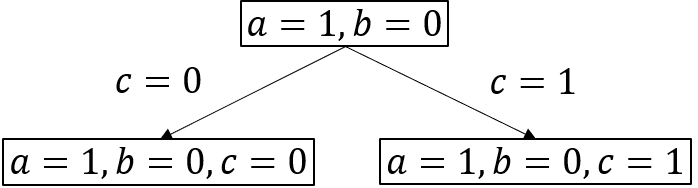}
\caption{Branching on $c$.}
\label{fig:branch}
\end{minipage}\hfill
\begin{minipage}{.45\textwidth}
\centering
\includegraphics[width=\linewidth]{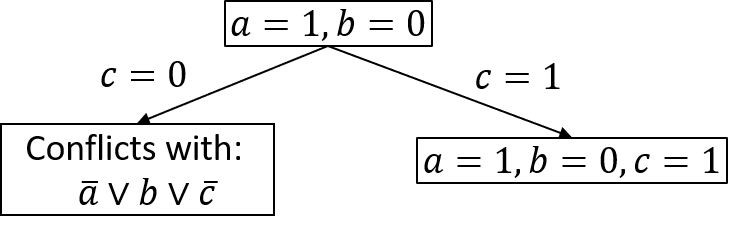}
\caption{Propagating to $c=1$.}
\label{fig:propagate}
\end{minipage}\hfill
\end{figure}
\begin{figure}[t]
  \centering 
\includegraphics[scale = 0.35]{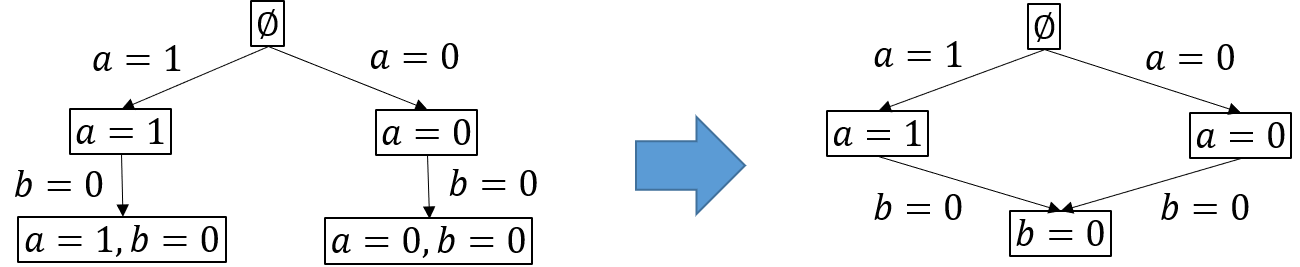}
\caption{Merging on the common assignment $\{b=0\}$.}
\label{fig:merge}
\end{figure}

\section{Array Multipliers}
\label{sec:array}
\subsection{Array Multiplier Construction}
We describe our SAT instances as a set of constraints, where each constraint is 
a set of clauses. Our circuits are built using $\emph{adders}$ that output, in 
binary, the sum of three input bits. An adder is encoded as follows:
\begin{definition}
Let $a_0,a_1,a_2$ be inputs to an adder $A$. The outputs $c,d$ of the adder 
$A$ are encoded by the constraints:
\[ d = a_0 \oplus a_1 \oplus a_2 \quad \quad c = MAJ(a_0, a_1, a_2) \] 
We call $c$ \emph{carry-bit} and $d$ the \emph{sum-bit}. 
If an adder has two constant $0$ inputs it acts as a \emph{wire}. If it has precisely one constant input 
$0$, we call it a \emph{half adder}. If no inputs are constant, we call it 
a \emph{full adder}.
\end{definition}

Each circuit variable has a \emph{weight} of the form $2^i$. 
Each adder will take in three bits of the same weight $2^i$ and output a 
\emph{sum-bit} of weight $2^{i}$ and a \emph{carry-bit} of weight $2^{i+1}$. 
The adder's definition ensures that the weighted sum of its input bits is the 
same as the weighted sum of its output bits. In the constructions that follow, 
we divide the adders up into columns so that the $i$-th column contains all 
the adders with inputs of weight $2^i$.

\paragraph{Ripple-Carry Adder:}
\begin{figure}[t]
  \centering 
\includegraphics[scale = 0.4]{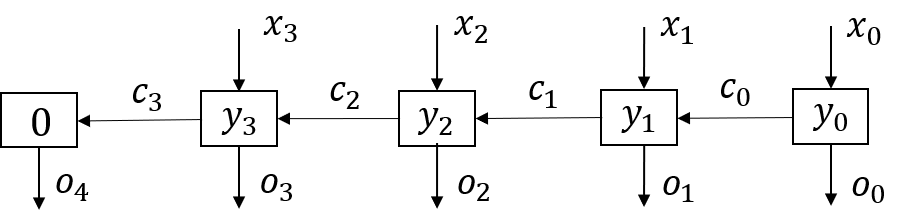}
\caption{4-bit ripple-carry adder adding $\mathbf{x,y}$. 
Each box represents a full adder with incoming arrows and outgoing arrows representing inputs 
and outputs.}
\label{fig:RCA}
\end{figure}
A ripple-carry adder, shown in Figure~\ref{fig:RCA},
 takes in two bitvectors $\mathbf{x,y}$ and outputs their sum in binary. 
In the $i$-th column, for $i \leq n$, we place an adder $A_i$ that takes 
the three variables $c_{i-1},x_i,y_i$ and outputs the adder's carry 
variable and sum variable to $c_i$ and $o_i$ respectively.
In the $(n+1)$-st column we place a wire $A_{n+1}$ taking $c_n$ as input and 
outputting to $o_{n+1}$. While the implementation is simple, it has depth $n$.

All the multipliers we describe perform two phases of computation to compute 
$\bf{xy}$. The first phase is the 
same in each multiplier: the circuit computes a \emph{tableau} of values 
$x_i \wedge y_j$ for each pair of input bits $x_i$ and $y_j$. These multipliers differ in the second 
phase, where the circuit computes the weighted sum of the bits in the tableau.

\paragraph{Array Multiplier:}
\label{sec:addstep}
An $n$-bit array multiplier works by arranging $n$ ripple-carry adders in order to sum the $n$ rows of the tableau. This multiplier has a 
simple grid-like architecture that is compact and easy to lay out physically. 
It has depth linear in its bitwidth.
In the first phase, an array multiplier computes 
each tableau variable $t_{ij} = x_i \wedge y_j$, with associated weight $2^{i+j}$.

\begin{figure}[t]
\centering
\begin{minipage}{.48\textwidth}
\centering
\includegraphics[width=\linewidth]{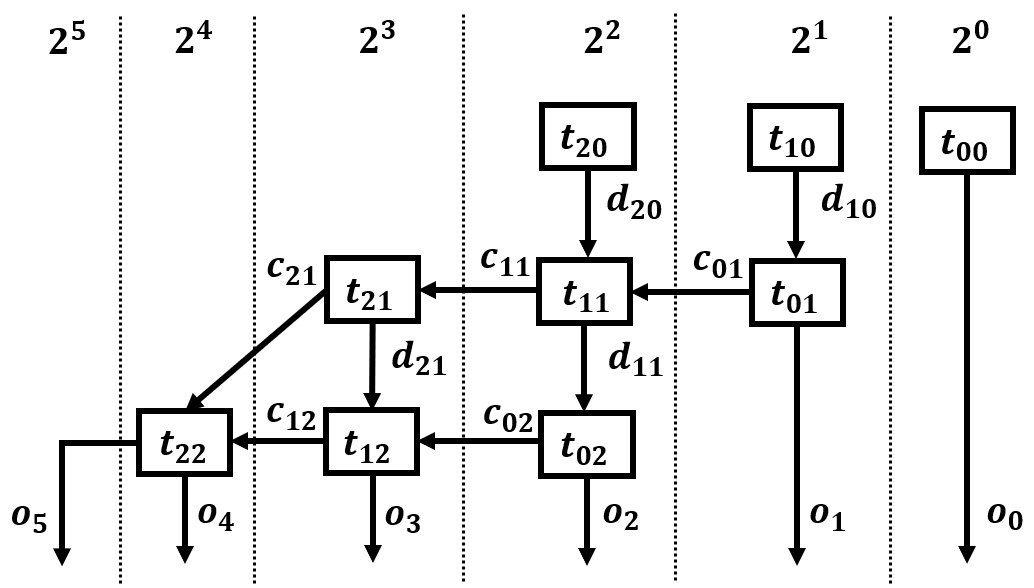}
\caption{3-bit array multiplier.}
\label{fig:addstep}
\end{minipage}\hfill
\begin{minipage}{.48\textwidth}
\centering
\includegraphics[width=\linewidth]{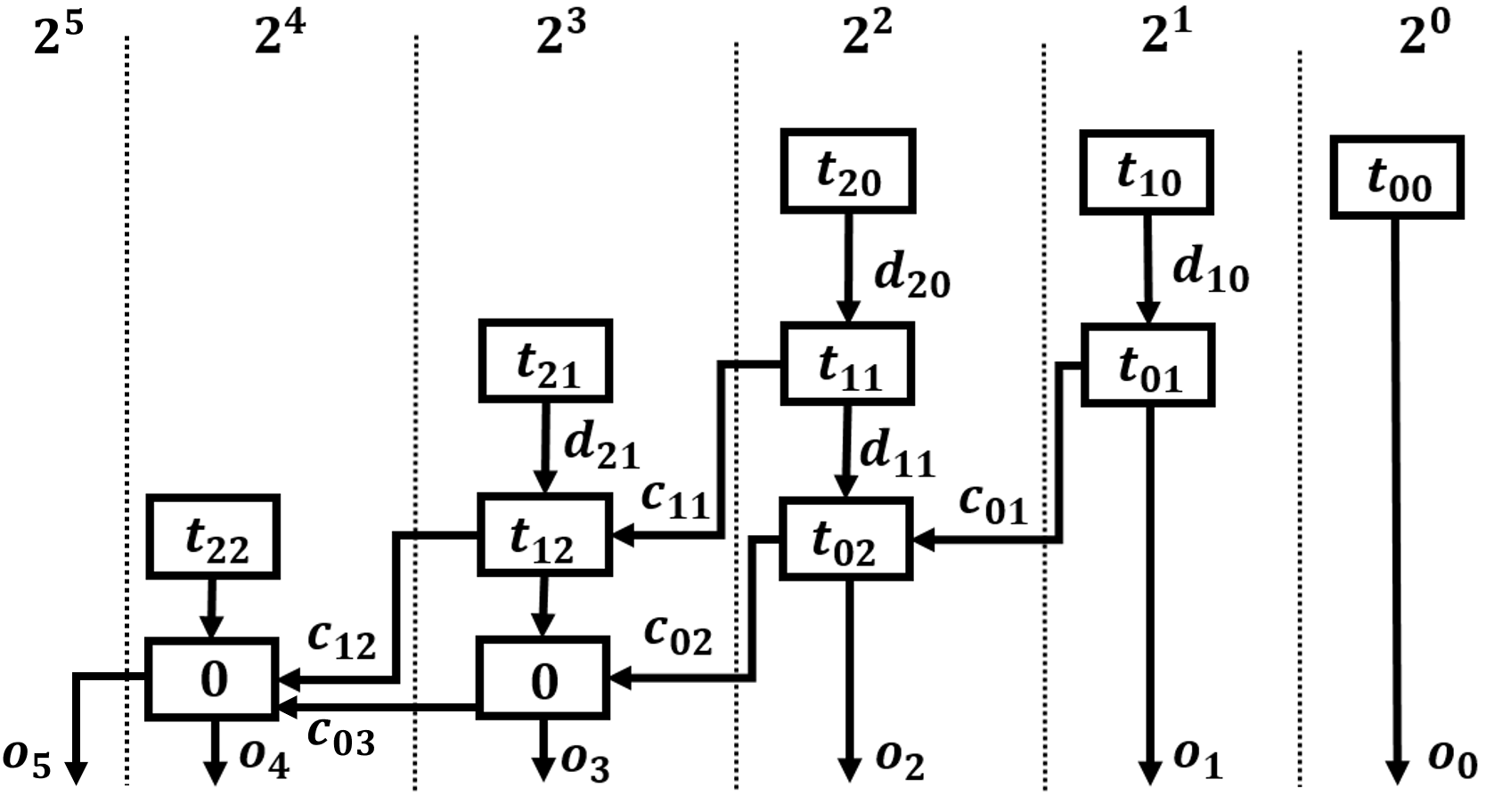}
\caption{3-bit diagonal multiplier.}
\label{fig:diagonal}
\end{minipage}\hfill
\end{figure}

Arrange a grid of full adders $A_{i,j}$, where $i,j \in [0,n]$, as shown in 
Figure~\ref{fig:addstep}. Adder $A_{i,j}$ occupies the $j$-th row and the $(i+j)$-th column and outputs the carry and sum bits $c_{i,j}$ and $d_{i,j}$. For $i<0$, adder $A_{i,j}$ takes inputs $t_{i,j}, d_{i+i,j-1}, c_{i-1,j}$ (replacing nonexistent variables with the constant $0$). Adders of the form $A_{n,j}$ take input $c_{n,j-1}$ instead of $c_{n-1,j}$.
Finally, we add constraints equating the sum-bits
$d_{0,0},d_{0,1}, \ldots,d_{0,n-1},d_{1,n-1},\ldots,d_{n-1,n-1}$ with the corresponding output bits $o_0,o_1,\ldots,o_{2n-1}$.

\subsection{Overview: Efficient Proofs for Degree Two Array Multiplier Identities}
\label{sec:addstep_identities}

We give polynomial-size resolution proofs that commutativity, distributivity, and the identity $x(x+1) = x^2 + x$ hold for a correctly implemented array multiplier. We go on to give polynomial-size resolution proofs for general degree two identities.

\paragraph{Proof Overview:}
The main idea, common to our proofs for each circuit family including Wallace tree multipliers, is to start by branching according to the lowest order disagreeing output bit between the two circuits. In each of these branches, the subcircuit, which we call a \emph{critical strip}, consists of the constraints on a small number of columns behind the disagreeing bit. For a large enough choice of width this critical strip is unsatisfiable since the removed section of the tableau on the right does not have enough total weight to cause the disagreeing output bit. It then remains to refute each critical strip.  

Our proofs inside each critical strip repeat three steps: (1) Branch on some of the input bits. (2) Propagate those values as far in the circuit as possible. (3) Save the resulting assignment to the boundary of the propagation. We call each of these boundaries a \emph{cut} in the circuit. 

These \emph{cuts} are sets of variables that, under any assignment, split the strip into a satisfiable and an unsatisfiable region. If a cut assignment was propagated from an earlier portion of the circuit, then this cut assignment is consistent with an assignment to this earlier subcircuit. But since the critical strip as a whole is unsatisfiable, this cut assignment must be inconsistent with any assignment to the rest of the circuit. Using these cuts, we reduce the unsatisfiable region in the critical strip until it is trivially refuted.

\update{One can view our proof as showing that the constraints within each
strip form a graph of \emph{pathwidth} $O(\log n)$ which, by~\cite{DBLP:conf/uai/Dechter96}, implies that there is a polynomial-size ordered resolution
refutation of the strip.  In the case of commutativity, our argument implies
that the constraint graphs
for the strips can be combined to yield a single constraint graph of
pathwidth $O(\log n)$.  For the other identities, the orderings on the strips
are different and the resulting constraint graphs only have small \emph{branchwidth} which, by~\cite{DBLP:conf/focs/AlekhnovichR02}, still implies that there
are small regular resolution proofs of the other identities.  Rather than simply
invoke these general arguments, we give the details of the resolution proofs,
along with more precise size bounds.}

\subsection{Proofs of Array Multiplier Commutativity}

\begin{definition}
We define a SAT instance $\addstepcomm(n)$. The inputs are length $n$ bitvectors $\mathbf{x,y}$. Using the construction from Section~\ref{sec:addstep}, we define array multipliers $L^{xy}$ and $R^{yx}$. The tableau variables are defined by the constraints 
\[t^{xy}_{i,j} = x_i \wedge y_j, \quad \quad t^{yx}_{i,j} = y_i \wedge x_j, \]
and in particular we can infer, through resolution, that $t^{xy}_{i,j} = t^{yx}_{j,i}$.

After specifying the subcircuits $L^{xy}$ and $R^{yx}$, we add a final subcircuit $E$, a set of \emph{inequality-constraints} encoding that the two circuits disagree on some output bit:
\[e_i = \big[ o^{xy}_i \neq o^{yx}_i \big] 
 \quad \forall i \in [0,2n-1],\]
 \[e_0 \vee e_1 \vee \ldots e_{2n-1}.\]
\end{definition}

We give a small resolution proof for $\addstepcomm(n)$ in the form of a labeled OBDD $B$, as described in Proposition~\ref{thm:prover_adv}. The variable order for $B$ begins with $e_0,e_1,\ldots$, followed by the output bits $o^{yx}_0, o^{yx}_1,\ldots$. Then $B$ reads the variables associated with adders $A^{xy}_{i,j}, A^{yx}_{j,i}$ in order of increasing $j$, reading each row right to left. Finally, $B$ reads the output bits $o^{xy}_0, o^{xy}_1,\ldots $, then the input bits $\mathbf{x},\mathbf{y}$ in an arbitrary order.

At the root of $B$, we search for the first output bit on which $L^{xy}$ and $R^{yx}$ disagree by branching on the sequences of bits $e_k=1, e_{k-1}=0, \ldots e_{0} = 0$ for each $k \in [0,2n]$. We will show that on each branch we can prove that $\addstepcomm(n)$ is unsatisfiable using only the constraints from $L^{xy}$ and $R^{yx}$ on the variables inside columns $[k-\log n,k]$.

\begin{definition}
Let $\Delta = \log n$.
Let $\strip(k) $ hold the constraints from $\addstepcomm(n)$ containing any tableau variable $t^{xy}_{i,j}$ or $t^{yx}_{i,j}$ for $i+j \in [k-\Delta,k]$.
Then add unit clauses to $\strip(k)$ to encode the assignment: $e_0=0, e_1=0, \ldots,e_{k-1}=0, e_k=1$. We call $\strip(k)$ a \emph{critical strip} of $\addstepcomm(n)$. We call the subset $\strip(k) \cap L$ the \emph{critical strip} of circuit $L$ and likewise for circuit $R$.
\end{definition}

\begin{lemma}
$\phi_{\mathrm{Strip}}(k)$ is unsatisfiable for all $k$.
\label{unsatStrip}
\end{lemma}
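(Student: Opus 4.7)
My plan is to prove unsatisfiability semantically by tracking a single integer quantity --- the column-wise difference of carry sums between $L$ and $R$ --- through the strip. The starting ingredient is that the strip contains both $t^L_{i,j}=x_i\wedge y_j$ and $t^R_{i,j}=y_i\wedge x_j$ for every $(i,j)$ with $i+j\in[k-\delta,k]$, so one gets $t^L_{i,j}=t^R_{j,i}$ throughout this range. In particular, for every column $c\in[k-\delta,k]$ the multiset of tableau bits feeding $L$'s adders in column $c$ coincides with the multiset feeding $R$'s adders; let $T_c$ denote their common integer sum.

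Next I would apply the full-adder identity $\text{sum-bit}+2\cdot\text{carry-bit}=(\text{input sum})$ at each adder of column $c$ and add these equations over the column. The intra-column sum-bits (those travelling from one row's sum-output into the next row's sum-input inside the same column) appear once as an input and once as an output and thus cancel, leaving the column-wise weight-conservation identity
\[
o^X_c + 2\cdot\operatorname{OutCar}^X_c \;=\; T_c + \operatorname{InCar}^X_c \qquad (X\in\{L,R\}),
\]
where $\operatorname{InCar}^X_c$ and $\operatorname{OutCar}^X_c$ denote the integer sums of the carry bits crossing the left and right boundaries of column $c$ in multiplier $X$. Setting $D_c:=\operatorname{InCar}^L_c-\operatorname{InCar}^R_c$ and using $\operatorname{OutCar}^X_c=\operatorname{InCar}^X_{c+1}$, subtraction of the $L$- and $R$-equations gives the recurrence $2D_{c+1}=D_c-(o^L_c-o^R_c)$, and in particular $D_c\equiv o^L_c\oplus o^R_c\pmod{2}$. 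The strip's unit clauses on $e_0,\ldots,e_k$ together with the defining clauses for each $e_c$ pin $o^L_c=o^R_c$ for $c\in[k-\delta,k-1]$ and $o^L_k\neq o^R_k$, forcing $D_c$ to be even throughout $[k-\delta,k-1]$ and $D_k$ to be odd. Unrolling the recurrence $\delta$ times then gives $D_{k-\delta}=2^{\delta}D_k$, so $|D_{k-\delta}|\geq 2^{\delta}$.

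The main obstacle --- and the very reason for the choice $\delta=\log n$ --- will be producing the matching upper bound on $|D_{k-\delta}|$. Each boundary carry entering column $k-\delta$ is a $\{0,1\}$-valued variable, and a direct row count shows that the number of such carry-in rows is at most $\min(k-\delta,\,2n-k+\delta-1)\leq n-1$. Hence $\operatorname{InCar}^X_{k-\delta}\in[0,n-1]$ for each $X\in\{L,R\}$, giving $|D_{k-\delta}|\leq n-1<n=2^{\delta}$, which contradicts the lower bound above. Thus $\strip(k)$ admits no satisfying assignment.
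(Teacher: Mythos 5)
Your proposal is correct and takes essentially the same approach as the paper: both arguments rest on the tableau equality $t^L_{i,j}=t^R_{j,i}$, weight conservation at each adder, and the bound of fewer than $n$ free carry bits of weight $2^{k-\delta}$ entering column $k-\delta$. The paper simply telescopes your column-by-column recurrence into one global weighted-sum identity (the strip outputs must differ by exactly $2^k$ while the input weighted sums differ by less than $n\cdot 2^{k-\delta}=2^k$), whereas you extract the same contradiction locally via the parity of $D_k$ and the doubling $D_{k-\delta}=2^{\delta}D_k$; the content is identical.
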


\begin{proof}
We interpret each critical strip as a circuit that outputs the weighted sum of the input variables in circuits $L^{xy}$ and $R^{yx}$. The assignment to $\mathbf{e}$ demands that the difference between the critical strip outputs is precisely $2^k$. But by $t^{xy}_{i,j}=t^{yx}_{j,i}$, the weighted sum of the tableau variables is the same in both critical strips. The difference in the critical strip outputs is then bounded by the larger of the sums of the input carry bits to column $k-\Delta$ in the two strips. There are fewer than $n$ input carry bits for each critical strip, each of weight $2^{k-\Delta} = 2^k / n$, therefore the difference in critical strip outputs is less than $2^k$, violating the assignment to $\mathbf{e}$.
\end{proof}

Observe that this proof only relied on the relation $t^{xy}_{ij} = t^{yx}_{ji}$ in the tableau variables. The additional requirement that the tableau variables came from an assignment to $\mathbf{x,y}$ is unnecessary to refute $\phi_{\mathrm{Strip}}(k)$.

\begin{lemma}
\label{tabProof}
There is an $O(k^7 \log k)$-sized ordered resolution proof that $\phi_{\mathrm{Strip}}(k)$ is unsatisfiable.
\end{lemma}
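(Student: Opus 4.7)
The plan is to apply Theorem~\ref{thm:prover_adv} and build an OBDD $B_k$ of size $O(k^7 \log k)$ that solves the conflict clause search problem for $\strip(k)$. I would use the restriction to the strip's variables of the global order introduced for $\addstepcomm(n)$: the output bits $o^R_{k-\delta},\dots,o^R_k$ first, then for each $j=0,1,\dots,k$ the adder variables (including tableau variables) of row $j$ in both $L$ and $R$, scanning right-to-left so that in-row carries ripple one adder at a time, then $o^L_{k-\delta},\dots,o^L_k$, and finally the $O(k)$ input bits $x_i, y_j$ that appear in strip tableau constraints.

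The core structural claim is that at every row boundary $B_k$ needs only an $O(\log k)$-bit \emph{cut} in the sense of the proof overview. Between rows $j$ and $j+1$ the cut records: the $O(\delta)$ sum bits $d^L_{\cdot,j}$ and $d^R_{\cdot,j}$ flowing vertically into row $j+1$; the two left-boundary carries $c^L_{k-\delta-1,j+1}$ and $c^R_{k-\delta-1,j+1}$ entering row $j+1$ from outside the strip; the $\delta+1$ committed bits of $o^R$; and an $O(\log k)$-bit accumulator storing the partial weighted-sum difference of $L$- and $R$-tableau values processed so far, truncated modulo $2^{k+1}$. This accumulator is exactly the running quantity driving the weight argument behind Lemma~\ref{unsatStrip}: because the strip is only $\delta=\log n$ columns wide with tableau weights in $[2^{k-\delta},2^k]$, the accumulator takes at most $k^{O(1)}$ distinct values. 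Thus each cut has $2^{O(\log k)}=k^{O(1)}$ states, each row's intra-row transition subgraph branches on $O(\log k)$ new variables and contributes $k^{O(1)}\cdot O(\log k)$ nodes, and summing over the $O(k)$ rows yields total OBDD size $O(k^7 \log k)$ after constants are tracked.

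To finish, at every leaf the full assignment is available and $B_k$ must output a clause of $\strip(k)$ that the leaf's path falsifies. The input-reading phase checks the tableau defining clauses $t^L_{i,j} = x_i \wedge y_j$ and $t^R_{j,i} = y_j \wedge x_i$: if any committed tableau value disagrees with the conjunction of its inputs, that defining clause is returned as the conflict; otherwise $t^L_{i,j}=t^R_{j,i}$ throughout the strip and Lemma~\ref{unsatStrip} guarantees that the committed $o^L$ and $o^R$ bits cannot jointly satisfy the inequality constraint defining $e_k$ together with the unit clauses $e_0=0,\dots,e_k=1$, so one of those clauses is output.

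The main obstacle is making the cut argument precise while honoring the \emph{ordered} (single total order) restriction. Because inputs are read last, during the adder sweep $B_k$'s paths commit to individual tableau values without being able to discharge the defining clauses, which at first glance threatens $\Omega(k\log k)$ bits of state. The essential observation is that what downstream computation actually needs is only the aggregate weighted $L$--$R$ tableau difference (the $O(\log k)$-bit accumulator) together with the sum bits crossing the row cut; two paths committing different individual tableau values but the same aggregate and the same crossing sum bits yield the same reachable conflicting clauses and can be merged. Verifying this merging, checking that the right-to-left scan truly keeps only one in-row carry live at a time (so that per-row overhead is $O(\log k)$ rather than exponential in the strip width), and pinning the polynomial exponent down to $7$ is the principal technical work.
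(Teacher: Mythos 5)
Your high-level architecture (branch on the strip's inputs row by row, remember only an $O(\log k)$-variable cut at each row boundary, invoke Lemma~\ref{unsatStrip} at the end) matches the paper's, but the step you flag as ``the principal technical work''---merging paths that commit different individual tableau values but share the same aggregate weighted $L$--$R$ difference---is precisely where the argument breaks, and the accumulator cannot be repaired into a resolution object. In the translation of Theorem~\ref{thm:prover_adv}, each branching-program node corresponds to the maximal clause falsified by \emph{every} path reaching it; once you merge two paths that disagree on some $t^L_{i,j}$, no clause at or below that node can mention $t^L_{i,j}$, so no tableau-defining clause for position $(i,j)$ is reachable as an output. Yet because you read $\mathbf{x},\mathbf{y}$ last and never identify $t^R_{j,i}$ with $t^L_{i,j}$ during the scan, there are total assignments in which every adder, output-equality, and inequality constraint is satisfied and the \emph{only} falsified clauses of $\strip(k)$ are tableau-defining clauses (take two internally adder-consistent but unequal tableaus whose weighted sums differ by exactly $2^k$, so the outputs first disagree at bit $k$); at such a leaf your program has nothing it can legally output. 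The aggregate difference is a semantic invariant of the counting argument in Lemma~\ref{unsatStrip}, not information that a width-$O(\log k)$ clause can carry through the proof.

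The paper avoids this entirely by a preprocessing step you omit: it first resolves on $\mathbf{x},\mathbf{y}$ to derive $t^R_{j,i}=t^L_{i,j}$ and substitutes, so there is a single tableau, no accumulator, no deferred tableau check, and the terminal conflict is always with an inequality constraint. That substitution also exposes the second problem with your order: revealing row $j$ of $L$ reveals the \emph{transposed} adders $A^R_{j,i}$ of $R$ (spread across $R$'s rows and arriving bottom-to-top), not row $j$ of $R$; pairing ``row $j$ of $L$ with row $j$ of $R$'' shares only the single variable $t^L_{j,j}$ between the two circuits and would again force you to remember $\Theta(k\log k)$ tableau bits. This transposition is why the paper guesses $\mathbf{o}^R$ up front and places both sum and carry variables of $R$ into $\cut(j)$, while $L$ contributes only a single row of sum bits.
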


\begin{proof}
For simplicity we assume that $k \leq n$; the case where $k > n$ is similar. We will also preprocess $\phi_{\mathrm{Strip}}(k)$ by resolving on the variables in $\mathbf{x,y}$ to obtain the tableau variable relations $t^{yx}_{j,i} = t^{xy}_{i,j}$, then replacing all the variables $t^{yx}_{j,i}$ by $t^{xy}_{i,j}$ in the clauses $\phi_{\mathrm{Strip}}(k)$. Viewing the proof as a branching program, this amounts to querying $\mathbf{x,y}$ at the end. We will not resolve on $\mathbf{x,y}$ in the remainder of this proof.

We give this resolution proof in the form of a labeled read-once branching program $B$. We define the \emph{input variables} $\sigma_{input}$ as the set of tableau variables of circuit $L^{xy}$, together with the carry variables from column $k-\Delta-1$ of both $L^{xy}$ and $R^{yx}$. We say $\sigma_{input}$ contains the \emph{input variables} to this critical strip, since their values determine an output assignment.

The idea behind the branching program $B$ is to verify circuit $L^{xy}$ by branching on its input variables row-by-row, going from top-to-bottom, remembering an assignment to a row of sum-variables. Since $t^{xy}_{i,j} = t^{yx}_{j,i}$, the tableau variables of circuit $R^{yx}$ simultaneously are revealed from bottom to top. In circuit $R^{yx}$ we maintain both a guess for its output values, and a row of sum-variables. From the proof of Lemma~\ref{unsatStrip}, if we have found that the outputs of $L^{xy}$ and $R^{yx}$ were computed correctly then they must violate one of the constraints $e_{k}=0, \ldots,e_{k-\Delta+1}=0, e_{k-\Delta}=1$.

\begin{definition}
Define $\cut(0)$ as the set of variables containing
\[  d^{yx}_{0,i}, o^{yx}_{i-1} \quad \textrm{for} \quad i-1 \in [k-\Delta,k]. \]
For $j \in [1,k-\log k]$, we define $\cut(j)$ to be the set containing the variables:
\begin{align*}
 d^{xy}_{i,j-1}, d^{yx}_{j,i-1} \quad &\textrm{for} \quad i+j-1 \in [k-\Delta,k], \\
 c^{yx}_{j-1,i} \quad &\textrm{for} \quad i+j-1 \in [k-\Delta,k-1], \\
 o^{yx}_{i} \quad &\textrm{for} \quad i \in [k-\Delta,k].
\end{align*}
Lastly, for $j \in [k-\Delta,k]$, we define $\cut(j)$ to be the set containing the variables, when the indices are in-range:
\begin{align*}
 o^{xy}_{i} \quad &\textrm{for} \quad i \in [k-\Delta,j-1], \\
  d^{xy}_{i+1,j-1}, d^{yx}_{j,i},c^{yx}_{j-1,i} \quad &\textrm{for} \quad i+j \in [k-\Delta,k], \\
 c^{yx}_{j-1,i} \quad &\textrm{for} \quad i+j-1 \in [k-\Delta,k-1], \\
 o^{yx}_{i} \quad &\textrm{for} \quad i \in [k-\Delta,k]. 
\end{align*}
\end{definition}
We will label each node of $B$ by the pair $(\cut(j),\sigma)$ where $\cut(j)$ keeps track of the previously seen cut.

\begin{figure}[t]
  \centering 
\includegraphics[width=0.9\textwidth]{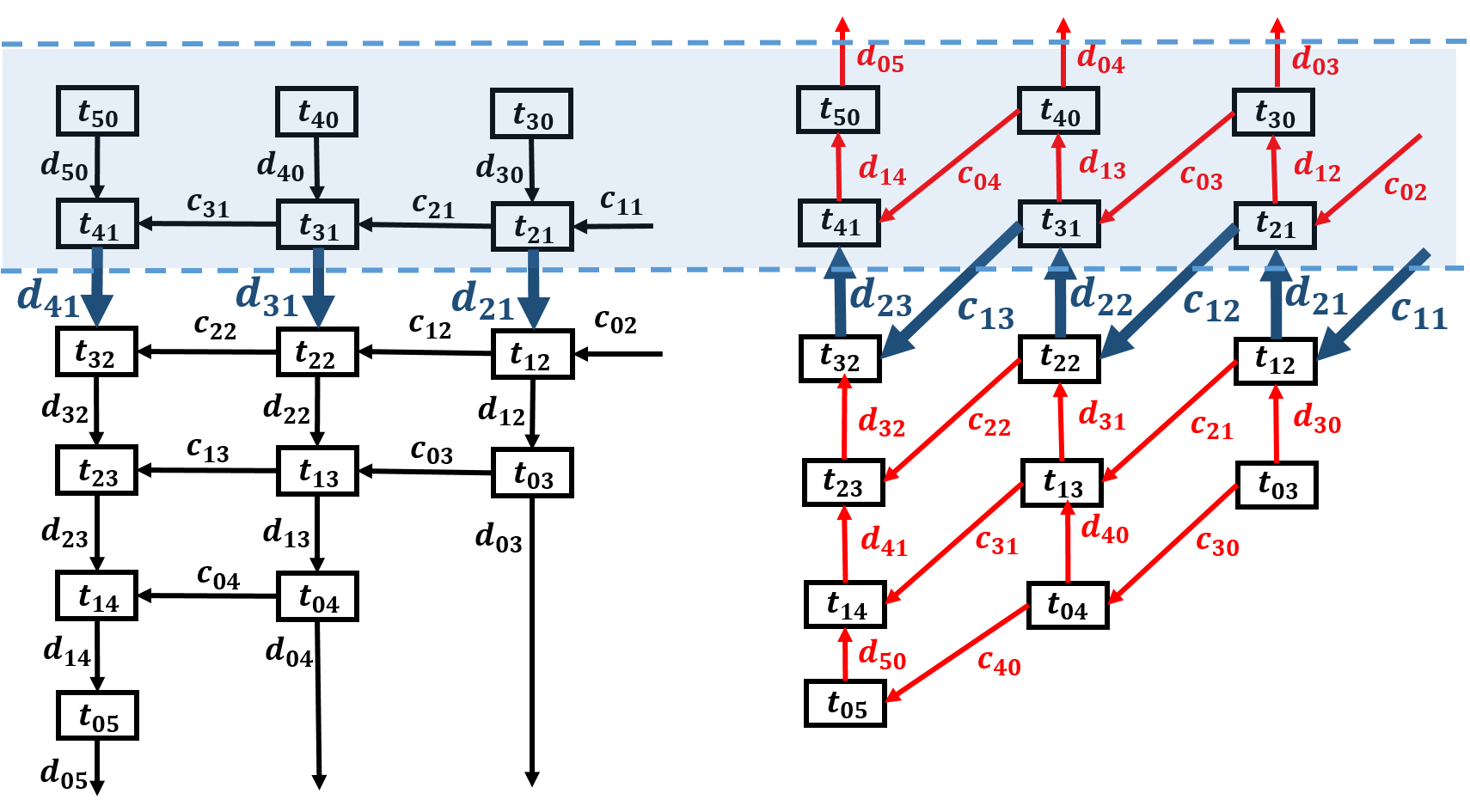}
\caption{The critical strip $\strip(5)$ for checking commutativity. The enlarged variables belong to $\cut(2)$ of $\strip(5)$. This cut divides the critical strip into a shaded satisfiable region and an unshaded unsatisfiable region.}
\label{fig:commCut}
\end{figure}

\paragraph{Initialization:} Throughout, we work in terms of the tableau variables in circuit $L^{xy}$, implicitly substituting $t^{xy}_{ij}$ for $t^{yx}_{ji}$. We begin at the root node of the read-once branching program $B$, labeled with an empty cut and an empty partial assignment $(\emptyset, \emptyset)$. For $i \in [k-\Delta,k]$ we branch on the variable $o^{yx}_i$, then propagate to $d^{yx}_{0,i}$ using a clause from the constraint $o^{yx}_i = d^{yx}_{0,i}$. The surviving branches are those labeled by an assignment satisfying the constraints $o^{yx}_i =  d^{yx}_{0,i}$. At this point we have reached nodes labeled $\cut(0)$.

For each of the surviving branches, we branch on the tableau variables in the first row of $xy$:
\[t^{xy}_{i,0} \quad \mathrm{for} \quad i \in [k-\Delta,k].\]
Then we propagate to the variables, in sequence,
\[d^{yx}_{1,i}, c^{yx}_{0,i} \quad \mathrm{for} \quad i+1 \in [k-\Delta,k] \]
from $\cut(1)$ (notice that this does not include the input carry-bit $c^{yx}_{0,k-\Delta-1}$). We then merge on $\cut(1)$.

\paragraph{Inductive Step:} We now describe the transition from $\cut(j)$ to $\cut(j+1)$ for $1\le j \le k$. Suppose that the branching program $B$ has reached an assignment to $\cut(j)$. From these nodes we branch on the next, $j$-th row's tableau variables
\[  t^{xy}_{i,j} \quad \mathrm{for} \quad i+j \in [k-\Delta,k] \]
and, when they exist, the pair of incoming input carry variables $c^{L}_{i,j}, c^{R}_{j-1,i}$ from column $k-\log k-1$.
We then propagate to the $\cut(j+1)$ and $c^L$ variables in the sequence:
\[c^{xy}_{i,j},d^{xy}_{i+1,j}  \quad \text{for} \quad i+j+1 \in [k-\log k,k]\]
in circuit $L^{xy}$. If $j \in [k-\Delta,k]$ then we also propagate to $o_{j-1}$.
\[c^{yx}_{j,i}, d^{yx}_{j+1,i}  \quad \text{for} \quad i+j+1 \in [k-\log k,k] \]
in circuit $R^{yx}$.
After branching on the last variable in $\cut(j+1)$ we start labeling nodes by $\cut(j+1)$ and merge branches on their assignment to $\cut(j+1)$. This completes the step from $\cut(j)$ to $\cut(j+1)$. 

We repeat this step until we have reached $\cut(k+1)$. At this point we have an assignment to the critical strip output bits $\mathbf{o}^{xy},\mathbf{o}^{yx}$. Furthermore, both output assignments were the result of, and therefore consistent with, propagating from a single assignment on the input variables $\sigma_{inputs}$. By the proof of Lemma~\ref{unsatStrip}, this implies that our assignment to $\mathbf{o}^{xy},\mathbf{o}^{yx}$ conflicts with an inequality constraint.

\paragraph{Size Bound:} We show that there are $O(k^6 \log k)$ nodes in $B$. Each $\cut(j)$ section of $B$ begins with an assignment to at most $4 \log k$ variables, so there are at most $k^4$ nodes labeled by an assignment to precisely $\cut(j)$.  We branch on up to $\log(k)+2$ input variables, so each cut has a full binary tree of $8k$ nodes branching on different configurations of input variables. For each leaf of this tree, $B$ has a path of $O(\log k)$ nodes for propagating before the nodes get merged. Therefore each cut labels at most $O(k^5 \log k)$ nodes. There are $k+1$ different cuts, thus $B$ has at most $O((k+1) k^5 \log k) = O(k^6 \log k)$ nodes.
\end{proof}

Since the tableau variables were actually partial products of $\mathbf{x}$ and $\mathbf{y}$, we can make this proof smaller by branching on the bits of $\mathbf{x},\mathbf{y}$ to determine the tableau variables in a row, maintaining a sliding window of $\Delta$ bits of $\mathbf{x}$, yielding:

\begin{corollary}
\label{xyproof}
$\phi_{\mathrm{Strip}}(k)$ has an $O(k^5 \log k)$-size regular resolution refutation.
\end{corollary}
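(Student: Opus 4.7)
The plan is to modify the branching program $B$ from the proof of Lemma~\ref{tabProof} so that, rather than preprocessing $\mathbf{x,y}$ away and then branching on the tableau variables of $L$ row-by-row, the program branches directly on the bits of $\mathbf{x}$ and $\mathbf{y}$ and uses the input constraints $t^{L}_{i,j} = x_i \wedge y_j$ and $t^{R}_{j,i}=y_j \wedge x_i$ to derive each tableau entry by a single propagation step. The key new ingredient is a \emph{sliding window} of $\delta+1 = \log k + 1$ bits of $\mathbf{x}$ stored inside each cut label; when the program is about to process row $j$ the window holds exactly $\{x_i : i \in [k-\delta-j,\, k-j]\}$, the $x$-bits whose indices correspond to tableau entries still inside the strip on that row.

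To build the modified program, I would enlarge each $\cut(j)$ to record the current window of $x$ bits alongside the adder/output variables already present in Lemma~\ref{tabProof}. The transition from $\cut(j)$ to $\cut(j+1)$ then requires only two new branches: one on $y_j$, and one on the single new $x$-bit $x_{k-\delta-j-1}$ entering the window (the other $x_i$ needed in row $j$ have been branched on in earlier rows and live in the incoming cut label). After these branches, we propagate through $t^{L}_{i,j} = x_i\wedge y_j$ and $t^{R}_{j,i} = y_j\wedge x_i$ to fix the whole row of tableau variables on both sides, then propagate through the adders exactly as in Lemma~\ref{tabProof}. On merging into $\cut(j+1)$, the $x$-bit leaving the window ($x_{k-j}$) is simply forgotten, because its value is already absorbed into the adder variables now recorded in $\cut(j+1)$.

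For the size bound, the cut label now consists of the $O(\log k)$ adder/output variables from Lemma~\ref{tabProof} together with the $\log k + 1$ window bits, so the number of possible cut labels grows from $k^4$ to at most $O(k^5)$. The branching/propagation tree sitting above each cut label, however, shrinks dramatically: only two variables are queried per row, yielding $O(1)$ leaves, each followed by a propagation path of length $O(\log k)$. Hence each of the $k+1$ cuts contributes $O(k^5 \log k)$ nodes, for a total size of $O(k^6 \log k)$.

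The main thing to verify carefully is that the modified program remains read-once so that Theorem~\ref{thm:prover_adv} still applies: each $y_j$ is branched on exactly once, in its own row, and each $x_i$ is branched on exactly once, on the step when it enters the window; between that step and the step on which it leaves the window it lives only inside cut labels and is never queried. Since the adder variables of $L$ and $R$ inherit the read-once discipline of the Lemma~\ref{tabProof} proof, no variable is queried twice on any source-to-sink path, and the corresponding refutation is a regular resolution proof of size $O(k^6 \log k)$, as claimed.
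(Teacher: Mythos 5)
Your proposal is correct and is essentially the paper's own argument: the paper obtains Corollary~\ref{xyproof} precisely by replacing the row-by-row branching on tableau variables with branching on the bits of $\mathbf{x},\mathbf{y}$ while maintaining a sliding window of $\delta$ bits of $\mathbf{x}$ in the cut, and your accounting (larger cut labels of size $O(\log k)$ but only $O(1)$ new queries per row) recovers the $O(k^6\log k)$ bound in the same way.
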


We note that the alternative strategy of directly branching on the cuts to perform binary search on the critical strip yields the same size bound as Corollary~\ref{xyproof}

\begin{theorem}
Let $N = |\addstepcomm| = O(n^2)$. There is an $O(N^3 \log N)$ size regular resolution proof that $\addstepcomm$ is unsatisfiable. There is an $O(N^{7/2} \log N)$ size \emph{ordered} resolution proof that $\addstepcomm$ is unsatisfiable.
\end{theorem}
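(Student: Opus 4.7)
The plan is to combine the critical-strip refutations of Lemma~\ref{tabProof} and Corollary~\ref{xyproof}, one per candidate value of the lowest disagreeing output bit, using a short scanning phase at the root to commit to the bit position. Concretely, I would build the outer branching program as a linear spine that branches on $e_0, e_1, \ldots, e_{2n-1}$ in order: the root branches on $e_0$; the $e_0=1$ edge leaves the spine into the refutation for $\phi_{\mathrm{Strip}}(0)$, while the $e_0=0$ edge proceeds to a node that branches on $e_1$, and so on. At the $e_k=1$ exit, the path assignment is exactly the set of unit clauses $\{e_0=0,\ldots,e_{k-1}=0,e_k=1\}$ adjoined to form $\phi_{\mathrm{Strip}}(k)$, so I would splice in the read-once branching program of Corollary~\ref{xyproof} for the regular case, or the OBDD of Lemma~\ref{tabProof} for the ordered case; its leaves conflict with clauses that actually appear in $\addstepcomm(n)$ (either clauses encoding the inequality constraints $e_i$ or clauses of the adder gates), so the combined structure is a valid conflict-clause-search branching program. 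The all-zeros path $e_0=0,\ldots,e_{2n-1}=0$ off the end of the spine conflicts with the disjunction clause $e_0\vee\ldots\vee e_{2n-1}$.

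Next I would sum sizes. The spine contributes $O(n)$ nodes, and the $2n$ spliced strip programs contribute $\sum_{k=0}^{2n-1} O(k^6 \log k) = O(n^7 \log n)$ nodes in the regular case and $\sum_{k=0}^{2n-1} O(k^7 \log k) = O(n^8 \log n)$ nodes in the ordered case. Substituting $n = \Theta(\sqrt{N})$ for $N = |\addstepcomm(n)| = \Theta(n^2)$ gives $O(N^{7/2}\log N)$ and $O(N^4 \log N)$ respectively, and Theorem~\ref{thm:prover_adv} translates these branching programs back into resolution refutations of the same size.

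The main obstacle will be the ordered case: splicing preserves the read-once property automatically (distinct strip subprograms sit in disjoint branches, and the $e_i$ are read only once, on the spine), but being an OBDD demands a single global variable order compatible with every spliced subprogram. I would fix the global order $e_0,\ldots,e_{2n-1},\ o^R_0,\ldots,o^R_{2n-1}$, then all adder-associated variables in order of increasing row index $j$ and right-to-left within each row, then $o^L_0,\ldots,o^L_{2n-1}$, then the input bits $\mathbf{x},\mathbf{y}$, and then check that the restriction of this global order to the variables queried inside the OBDD for $\phi_{\mathrm{Strip}}(k)$ matches the internal order that Lemma~\ref{tabProof} uses. Since each strip only reads variables in columns $[k-\delta,k]$ and does so with exactly this row-major, right-to-left pattern, the restriction agrees strip by strip, so the glued object is a genuine OBDD and Theorem~\ref{thm:prover_adv} delivers the claimed ordered resolution refutation.
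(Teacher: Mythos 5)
Your proposal is correct and follows essentially the same route as the paper: branch at the root on the assignments $\{e_k=1, e_{k-1}=0,\ldots,e_0=0\}$ to isolate the lowest disagreeing output bit, splice in the strip refutations from Corollary~\ref{xyproof} (regular) or Lemma~\ref{tabProof} (ordered), and sum the sizes over the $2n$ strips. Your added check that the global variable order restricts consistently to each strip's OBDD is exactly the order the paper fixes at the start of its commutativity argument, so nothing is missing.
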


\begin{proof}
We can now describe the overall branching program $B$ for $\addstepcomm(n)$. The branching program branches on the inequality-constraint assignments $\sigma_e(k) = \{e_k=1,e_{k-1}=0,\ldots e_0=0\}$ for $k \in [0,2n-1]$. The $k$-th branch contains the clauses $\strip(k)$ so we can use the read-once branching program from either Corollary~\ref{xyproof} or Lemma~\ref{tabProof} (with each node augmented with the assignment $\sigma_e(k)$) to show that the branch is unsatisfiable. Corollary~\ref{xyproof} yields the regular resolution proof and Lemma~\ref{tabProof} yields the ordered resolution proof. 
\end{proof}

\subsection{Proofs of Array Multiplier Distributivity}

\begin{definition}
We define a SAT instance $\phi^{\mathrm {Array}}_{\mathrm{Dist}}(n)$ to verify the distributivity property
$x (y + z) = xy + xz$
for an array multiplier in the natural way. For the left hand expression we construct a ripple-carry adder $L^{y+z}$, outputting $\mathbf{o}^{(y+z)}$, and array multiplier $L^{x(y+z)}$ outputting $\mathbf{o}^{x(y+z)}$. For the right hand expression, we similarly define circuits $R^{xz}$, $R^{xy}$ and $R^{xy+xz}$.

We define $L =  L^{y+z} \cup L^{x(y+z)}$ and $R = R^{xz} \cup R^{xy} \cup R^{xy+xz}$. We let $E$ contain the usual inequality constraints. The full distributivity instance is then $\phi^{\mathrm {Array}}_{\mathrm{Dist}}(n) = L \cup R \cup E$.
\end{definition}

We again divide the instance into critical strips, following the strategy previously used to refute $\phi^{\mathrm {Array}}_{\mathrm {Comm}}$. 

\begin{definition}
Define the constant $\Delta = \log (2n)$. Let $\phi_{\mathrm{Strip}}(k)$ contain the following constraints from $\phi^{\mathrm {Array}}_{\mathrm{Dist}}(n)$: first, the full ripple-carry adder circuit $L^{y+z}$. Second, include the constraints containing one of the tableau variables $t^{{x(y+z)}}_{i,j},t^{{xy}}_{i,j}, t^{{xz}}_{i,j}$ for $i+j \in [k-\Delta,k]$. Third, include the ripple-carry adder constraints on the carry-bits and sum-bits $c^{{xy+xz}}_{i},o^{{xy+xz}}_i$ for $i \in [k-\Delta,k]$. Lastly, add constraints to $\phi_{\mathrm{Strip}}(k)$ that assign: $e_k=1, e_{k-1}=0, \ldots, e_0=0$.
\end{definition}

\begin{lemma}
\label{unsat_dist}
$\phi_{\mathrm{Strip}}(k)$ is unsatisfiable for all $k$
\end{lemma}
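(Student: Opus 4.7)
The plan is to extend the weight-conservation argument used for commutativity (Lemma~\ref{unsatStrip}) to this more elaborate identity. Rather than a single pair of array multipliers, we now have four subcircuits in the strip ($L_{x(y+z)}$, $R_{xy}$, $R_{xz}$, and the adder $R_{xy+xz}$) together with the entire $L_{y+z}$ adder. First I would invoke weight-preservation for each multiplier strip: the sum of tableau variables weighted by $2^{i+j}$, plus the incoming carries from column $k-\delta-1$ weighted by $2^{k-\delta}$, equals the sum of in-strip output bits weighted by $2^i$, plus $2^{k+1}$ times the number of outgoing carries at column $k+1$. Combining these with the analogous conservation for the $R_{xy+xz}$ adder yields
\[ O^L - O^R \;=\; (T^L - T^{R_{xy}} - T^{R_{xz}}) \;+\; A_{\mathrm{in}} \;-\; 2^{k+1} A_{\mathrm{out}}, \]
where $O^L, O^R$ are the weighted sums of output bits in columns $[k-\delta, k]$ on the two sides, $T^L = \sum_{i+j \in [k-\delta,k]} x_i(y+z)_j\, 2^{i+j}$, $T^{R_{xy}}$ and $T^{R_{xz}}$ are the corresponding right-side tableau sums, $A_{\mathrm{in}}$ collects the (low-weight) incoming boundary carries from all four subcircuits, and $A_{\mathrm{out}}$ is an integer built from the outgoing boundary carries.

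Next, the inclusion of the \emph{full} $L_{y+z}$ circuit in the strip lets me tie $(y+z)_j$ to $y_j + z_j$ via the ripple-carry relation $(y+z)_j = y_j + z_j + c^{L_{y+z}}_{j-1} - 2\,c^{L_{y+z}}_j$, where $c^{L_{y+z}}_j$ denotes the carry out of position $j$ in $L_{y+z}$. Substituting this into $T^L$ and re-indexing the $-2 c^{L_{y+z}}_j$ piece (via $j \mapsto j+1$), most terms cancel telescopically and I would arrive at
\[ T^L - T^{R_{xy}} - T^{R_{xz}} \;=\; 2^{k-\delta}\!\!\!\sum_{i+j = k-\delta}\!\!\! x_i\, c^{L_{y+z}}_{j-1} \;-\; 2^{k+1}\!\!\!\sum_{i+j = k+1}\!\!\! x_i\, c^{L_{y+z}}_{j-1}. \]
Call these two sums $D_1$ (a small boundary term at column $k-\delta$) and $D_2$ (a multiple of $2^{k+1}$).

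The rest is a direct counting argument. The strip's equality constraints $e_i = 0$ for $i \in [k-\delta,k-1]$ together with $e_k = 1$ force $O^L - O^R = \pm 2^k$, which is $\equiv 2^k \pmod{2^{k+1}}$. Reducing the combined identity modulo $2^{k+1}$ kills $D_2$ and $2^{k+1} A_{\mathrm{out}}$, leaving $2^k \equiv D_1 + A_{\mathrm{in}} \pmod{2^{k+1}}$. Each of $D_1$, the three multiplier-strip incoming-carry sums, and the single $R_{xy+xz}$ boundary carry is bounded by $O(n \cdot 2^{k-\delta})$, so choosing $\delta = \Theta(\log n)$ sufficiently large makes $|D_1 + A_{\mathrm{in}}| < 2^k$, which cannot be $\equiv 2^k \pmod{2^{k+1}}$---contradiction.

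The hard part will be the telescoping step: the re-indexing and cancellation must be done carefully so that only the two boundary columns $k-\delta$ and $k+1$ survive and no interior-column term leaks through, and the $L_{y+z}$ subcircuit is the one place the three multiplier tableaus can be algebraically bridged, so getting the right cancellation pattern depends on aligning the two sums with a single index shift. A secondary bookkeeping task is confirming that the $O(n)$ boundary carries coming into each of the four subcircuits, each weighted by $2^{k-\delta}$, together fit inside the $2^k$ budget when $\delta$ is chosen to be a sufficient multiple of $\log n$.
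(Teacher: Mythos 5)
Your proposal is correct and follows essentially the same route as the paper's proof: weight conservation in each subcircuit's strip, the observation that the tableau of $L_{x(y+z)}$ has (nearly) the same weighted sum modulo $2^{k+1}$ as those of $R_{xy}$ and $R_{xz}$ combined, and a count of the $O(n)$ boundary carries of weight $2^{k-\delta}$. Your telescoping computation via $(y+z)_j = y_j + z_j + c^{L_{y+z}}_{j-1} - 2c^{L_{y+z}}_j$ makes explicit --- and correctly identifies the boundary term $D_1$ in --- the step the paper merely asserts, at the cost of possibly needing a slightly larger constant in $\delta$ than the paper's $\log(2n)$.
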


\begin{proof}
Like the proof of Lemma~\ref{unsatStrip}, the critical strip 
for $L^{x(y+z)}$ holds tableau bits with the same weighted sum (modulo $2^{k+1}$) 
as those in $R^{xz}$ and $R^{xy}$ combined. The critical strip for $L^{x(y+z)}$ 
has at most $n$ input carry-bits of weight $2^{k-\Delta}$. 
The critical strips of the $n$-bit multipliers $R^{xz}$ and $R^{xy}$ each have at most $n-1$ 
input carry variables of weight $2^{k-\Delta}$. The critical strip of the 
adder $R^{xy+xz}$ has one input carry variable, so the critical strip for $R$ 
has $2n-1$ input carry-bits. Since we set the width of the strip at $\Delta = \log (2n)$, it is unsatisfiable.
\end{proof}

\begin{lemma}
\label{thm:distStrip}
For each $k$ there is an $O(n^5 \log n)$ size regular resolution proof that $\phi_{\mathrm{Strip}}(k)$ is unsatisfiable.
\end{lemma}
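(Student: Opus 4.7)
The plan is to mimic the template of Lemma~\ref{tabProof} by building a read-once branching program $B$ that scans the strip row-by-row, maintaining at each stage an assignment to a small cut of $O(\log n)$ variables separating an already-propagated ``processed'' region from the remainder, then invoking Theorem~\ref{thm:prover_adv}. The unsatisfiability witness is Lemma~\ref{unsat_dist}: once we have propagated to the output bits of both sides, the encoded assignment $e_k=1, e_{k-1}=\cdots=e_0=0$ will be violated.

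First I would process $L_{y+z}$ so that its output bits are available when needed in $L_{x(y+z)}$. Since the three multiplier strips have tableaux with rows indexed by $j$, and since $(y+z)_j$ is the only $L_{y+z}$-output that appears in row $j$ of $L_{x(y+z)}$'s strip, I would interleave the ripple-carry computation with the row-by-row scan: immediately before processing row $j$ of the multiplier strips, I would branch on $y_j,z_j$ and use the (single) stored carry bit of $L_{y+z}$ to propagate out $(y+z)_j$ and the new carry. This way $L_{y+z}$ contributes only one extra bit of state at any time. For the row-$j$ step itself, I would branch on $x_i$ for the window of indices $i \in [k-\delta-j,k-j]$ (sliding by one per row, reusing the commutativity trick of Corollary~\ref{xyproof}), then branch on any input carry bits that enter the three strips in this row, propagate all tableau bits, adder outputs, and $R_{xy+xz}$ sum/carry bits forward, and finally merge on the updated cut. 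The cut at stage $j$ bundles the $O(\log n)$ relevant sum and carry variables on the right-hand boundary of each of the three multiplier strips, plus the single pending carry of $L_{y+z}$, the current sliding window of $x$-bits needed for later rows, the $O(\log n)$ carry/sum bits of the $R_{xy+xz}$ strip, and the output bits in $[k-\delta,k]$ not yet compared against the $e_i$-assignment.

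The main obstacle is precisely this interleaving: because $L_{y+z}$ is included in full (not as a strip), a naive approach would require remembering all previously-computed $(y+z)_j$ bits, blowing the state up beyond $\log n$. The observation that $(y+z)_j$ is used only in row $j$ of $L_{x(y+z)}$ is what lets us discard it immediately after row $j$ is processed. A secondary subtlety is that the strip of the final adder $R_{xy+xz}$ is itself a ripple-carry segment of length $\delta+1$, so its cut variables must be threaded through the scan in the same way as those of the multiplier strips; however, this introduces only $O(1)$ additional state per cut.

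For the size bound, each cut contains $O(\log n)$ variables, giving at most $n^{O(1)}$ distinct cut assignments; per row we branch on $O(\log n)$ fresh input variables (producing a binary tree of size $n^{O(1)}$) and then traverse a propagation path of length $O(\log n)$ before merging; and there are $O(n)$ rows to process. Multiplying these together and tracking the constants carefully (cut size $\approx 4\delta$ across the three strips plus $R_{xy+xz}$, with $\delta=\log(2n)$) yields the claimed $O(n^4 \log n)$ bound, analogous to Corollary~\ref{xyproof}.
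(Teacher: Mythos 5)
Your construction is correct and follows essentially the same route as the paper's proof: a row-by-row scan of the three multiplier strips with merges on $O(\log n)$-size cuts of boundary sum/carry variables, interleaving $L_{y+z}$ so that only its single pending carry is retained as state, and closing by invoking Lemma~\ref{unsat_dist} on the propagated outputs. The only immaterial deviations are that you thread the $R_{xy+xz}$ segment through the scan rather than propagating through it after the final cut, and you carry the sliding window of $x$-bits explicitly in the cut, which if anything makes the read-once bookkeeping more explicit at the cost of a slightly larger constant in the polynomial exponent.
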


\begin{proof}
We construct a labeled branching program $B$ that solves the conflict clause search problem for $\phi_{\mathrm{Strip}}(k).$ We branch row-by-row in the critical strips, maintaining an assignment to cuts of variables in each multiplier. For each strip we will select a (different) variable ordering for $\mathbf{x,y,z}$ that reveals the tableau variables row-by-row. Assume that $k < n$ for simplicity; the case where $k \geq n$ is similar.

For an array multiplier computing an expression $C \in \{{x(y+z)}, {xz}, {xy} \}$ and $j \in [1,k-\Delta]$ we define $\cut^C(j)$ to be the set of variables
\[d^{C}_{i,j-1} \quad \text{for} \quad i+j-1 \in [k-\Delta,k], \]
and for $j \in [k-\Delta+1,k]$ we define $\cut^C(j)$ as the set of variables
\begin{align*}
d^{C}_{i,j-1} \quad &\text{for} \quad i+j-1 \in [k-\Delta,k], \\
o^{C}_{i} \quad &\text{for} \quad i \in [k-\Delta,j-2] 
\end{align*}
We define $\cut^{y+z}(j)$ as the singleton set $\{c^{y+z}_{j-1}\}$ and define $\cut^{x}(j)$ as the set
\[x_i \quad \text{:} \quad i \in [k-j-\Delta,k-j].\].
We also refer to a global cut, across the whole circuit: $\cut(j) = \cup_C  \cut^{C}(j)$.
\begin{figure}[t]
  \centering 
\includegraphics[width=0.9\textwidth]{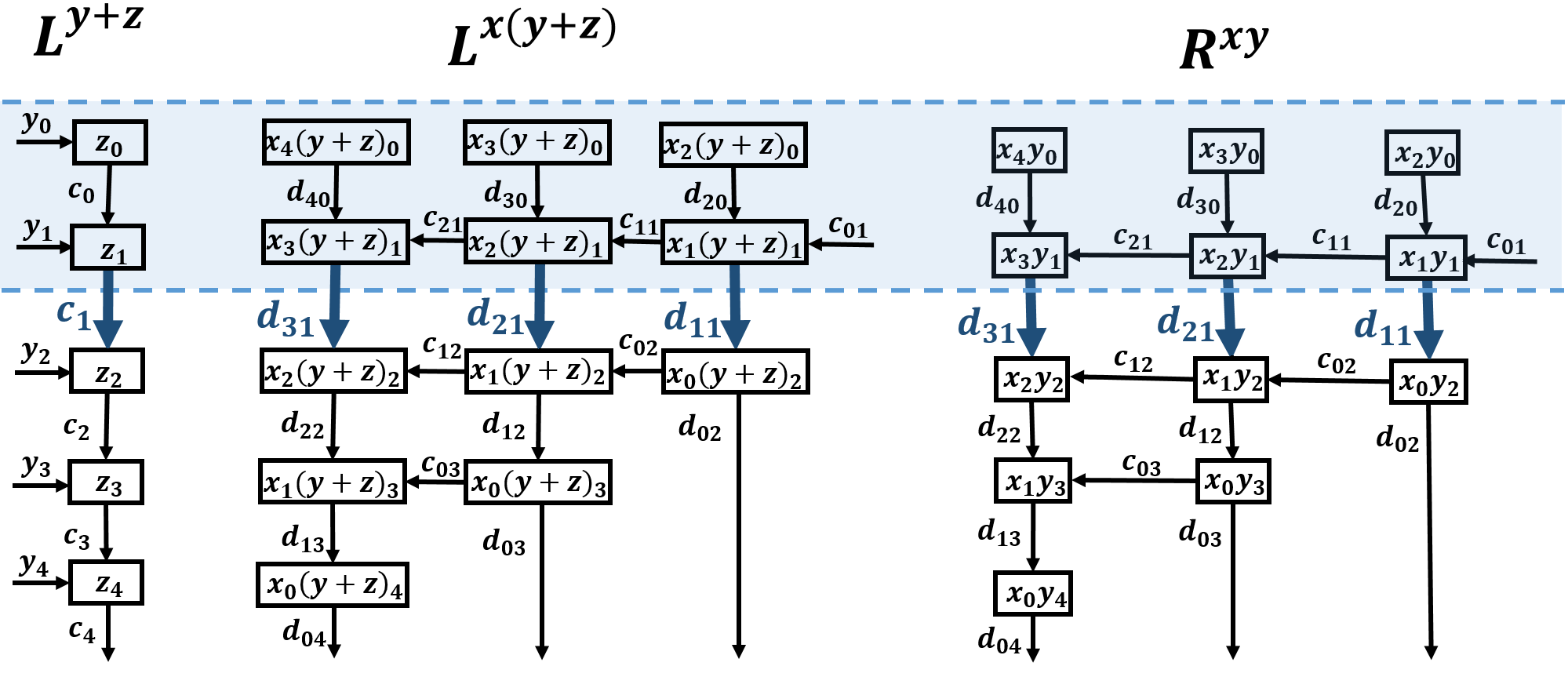}
\caption{The critical strip $\phi_{\mathrm{Strip}}(4)$ for distributivity. $\cut(2)$ consists of the enlarged variables.}
\label{fig:cutbranchingDist}
\end{figure}

\paragraph{Initialization: Getting to $\cut(1)$}

\begin{sloppypar}
At the root node $(\emptyset,\emptyset)$ of $B$, we branch on the circuit input variables $y_0,z_0$ and
\[ x_i \quad \text{for} \quad i \in [k-\Delta,k] .\]
We propagate these assignments to variables $c^{{y+z}}_0$ and $o^{{y+z}}_0$, giving us an assignment to $\cut^{{y+z}}(0)$. The assignment to $o^{{y+z}}_0$, in turn, propagates to an assignment to the first row of tableau and sum variables from the critical strip for $L^{x(y+z)}$:
\[ t^{{x(y+z)}}_{i,0}, d^{{x(y+z)}}_{i,0} \quad \text{for} \quad i \in [k-\Delta,k]. \]
At this point we have an assignment to $\cut^{{x(y+z)}}(0)$.
\end{sloppypar}

We then propagate the input variable assignments through the multipliers $R^{xy}$ and $R^{xz}$:
\[ t^{{xy}}_{i,0}, d^{{xy}}_{i,0} \quad \text{:} \quad i \in [k-\Delta,k], \]
\[ t^{{xz}}_{i,0}, d^{{xz}}_{i,0} \quad \text{:} \quad i \in [k-\Delta,k], \]
obtaining assignments to $\cut^{{xy}}(0)$ and $\cut^{{xz}}(0)$, thus completing an assignment to $\cut(0)$. At this point we merge nodes on assignment to $\cut(0)$.

\paragraph{Inductive Step: $\cut(j)$ to $\cut(j+1)$}

Suppose we have merged branches and are at a node labeled with an assignment to $\cut(j)$. If this assignment contains a variable $d^C_{0,i}$ we propagate to $o^C_{i}$. We branch on input variables 
$x_{k-\Delta-j}, y_{j}, z_{j}$. We then propagate these assignments to $c^{{y+z}}_{j+1}, o^{{y+z}}_{j+1}$, followed by the next row of tableau, carry, and sum variables in each multiplier:
\[ c^{C}_{i-j-2,j+1}, t^{C}_{i-j-1,j+1}, d^{C}_{i-j-1,j+1} \quad \text{:} \quad i \in [k-\Delta,k]. \]
At this point we have reached an assignment to all of the variables in $\cut(j+1)$ so we merge nodes based on $\cut(j+1)$. We repeat this step until reaching an assignment to $\cut(k+1)$, which consists of each multiplier's output bitvector $\mathbf{o}^C$.

\paragraph{End: Beyond $\cut(k+1)$}

Suppose that we have reached $\cut(k+1)$ and merged nodes. We branch on the input carry variable $c^{{xy+xz}}_{k-\Delta-1}$, that goes into the critical strip of ripple-carry adder $R^{xy+xz}$. We can then propagate to the outputs $\mathbf{o}^{{xy+xz}}$. We now have an assignment to both $\mathbf{o}^{{x(y+z)}},\mathbf{o}^{{xy+xz}}$ that was propagated from one assignment to the input variables to the critical strip. By Lemma~\ref{unsat_dist}, this assignment conflicts with an inequality-constraint from $E$.

\paragraph{Size Bound:} There are $k+1$ different global cuts $\cut(j)$. Each $\cut(j)$ section of $B$ begins with an assignment to at most $4\Delta+1$ variables. So each section $\cut(j)$ is initialized with at most $2^{4\Delta+1} = 8n^4$ branches. Each of these branches is a path with at most $O(\log n)$ queried variables and therefore at most $O(\log n)$ nodes. So there are at most $ O(n^4 \log n)$ nodes per cut and therefore at most $O((k+1) n^4 \log n) = O(n^5 \log n)$ nodes in $B$.
\end{proof}

\begin{theorem}
\label{thm:arrayDist}
There is an $O(n^{6} \log N)$ size resolution proof that $\phi^{\mathrm {Array}}_{\mathrm{Dist}}(k)$ is unsatisfiable.
\end{theorem}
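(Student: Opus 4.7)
The plan is to assemble the overall regular resolution refutation of $\phi^{\mathrm {Array}}_{\mathrm{Dist}}(n)$ exactly as was done for commutativity, invoking Lemma~\ref{thm:distStrip} as a black box for each critical strip. First I would construct a top-level branching program that reads the inequality variables $e_0, e_1, \ldots, e_{2n-1}$ in order, branching at each $e_k$ on whether it is $0$ or $1$, and descending along the unique branch $\sigma_e(k) = \{e_0 = 0, \ldots, e_{k-1}=0, e_k = 1\}$ for each $k \in [0, 2n-1]$. If every $e_i$ is set to $0$, the branching program immediately conflicts with the disjunction $e_0 \vee e_1 \vee \cdots \vee e_{2n-1}$, so this top-level scan produces at most $O(n)$ internal nodes and $2n$ distinguished continuation points, one per $k$.

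At each continuation point for $k$, I would splice in the read-once branching program for $\phi_{\mathrm{Strip}}(k)$ guaranteed by Lemma~\ref{thm:distStrip}, augmenting each of its nodes with the partial assignment $\sigma_e(k)$ so that the unit clauses encoding this $\mathbf{e}$-assignment in $\phi_{\mathrm{Strip}}(k)$ are available. Since the $e_i$'s are only branched on at the top level and never read again inside the strip refutations, the combined program is read-once as a whole, and therefore corresponds to a regular resolution proof by Theorem~\ref{thm:prover_adv}.

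The size calculation is then straightforward: the top-level scan contributes $O(n)$ nodes, and each of the $2n$ strip refutations contributes $O(n^4 \log n)$ nodes by Lemma~\ref{thm:distStrip}. The total is $O(n \cdot n^4 \log n) = O(n^5 \log n)$. Substituting $n = \Theta(N^{1/2})$ using $N = |\phi^{\mathrm {Array}}_{\mathrm{Dist}}(n)| = O(n^2)$ yields the claimed $O(N^{5/2} \log N)$ bound.

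The only subtlety—and really the only thing worth checking carefully—is that the splicing preserves the read-once property. The top-level scan reads each $e_i$ at most once, and the strip refutation for $\phi_{\mathrm{Strip}}(k)$ from Lemma~\ref{thm:distStrip} reads only variables of $L_{y+z}$, tableau/carry/sum variables of the three array multipliers inside the width-$\delta$ strip, and selected circuit inputs; in particular it does not branch on any $e_i$. Thus no variable is read twice along any source-to-leaf path, and the resulting regular resolution refutation has the claimed size.
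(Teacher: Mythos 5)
Your proposal matches the paper's own proof of Theorem~\ref{thm:arrayDist}: branch at the top level on the assignments $\sigma_e(k)$ to locate the first disagreeing output bit, then splice in the strip refutation of Lemma~\ref{thm:distStrip} for each $k$, giving $2n \cdot O(n^4 \log n) = O(N^{5/2}\log N)$. Your additional care in checking that the spliced program remains read-once is a useful elaboration, but the argument is essentially identical to the paper's.
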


\begin{proof}
At the root of this proof there are $2n$ branches each holding an assignment to $e_k,\ldots,e_1,e_0$. \update{We refute each branch using the $O(n^5 \log n)$ size proof from Lemma~\ref{thm:distStrip}.}
\end{proof}

\subsection{Proofs of $x(x+1) = x^2 + x$ for Array Multipliers}

\begin{definition}
\label{def:xpl1}
We define a SAT instance $\phi^{\textrm{Array}}_{x(x+1)}(n)$. Circuit $L$ is composed of circuits $L^{x+1}$, consisting of a ripple-carry adder taking inputs $\mathbf{x}$ and $\mathbf{1}$ and outputting their sum $\mathbf{(x+1)}$, and $L^{x(x+1)}$, an array multiplier outputting the product $\mathbf{x(x+1)}$. Similarly, circuit $R$ is composed of circuits $R^{x^2}$ and $R^{x^2+x}$.

We let $E$ contain the usual inequality-constraints. The instance is then
\[\phi^{\textrm{Array}}_{x(x+1)}(n) = L \cup R \cup E. \]
\end{definition}

While this identity looks like a special case of distributivity, its resolution proof is more complicated. This is because for distributivity: $x(y+z) = xy + xz$, the inputs to each multiplier were separate variables. This allowed us to scan the critical strip from one end to the other in a read-once fashion. If we try a similar strategy to scan the critical strip for the multiplier $R_{x^2}$ from top to bottom, we will read each $x_i$ twice. To avoid reading the same variable twice, we instead scan the critical strip from both ends, meeting in the middle.

\begin{definition}
Define the constant $\Delta = \log (2n-1)$. Let $\phi_{\textrm{Strip}}(k)$
contain the full ripple-carry adder circuit $L^{x+1}$ from $\phi^{\textrm{Array}}_{x(x+1)}(n)$. Also include the constraints containing one of the multiplier tableau variables $t^{x(x+1)}_{i,j},t^{x^2}_{i,j}$ for $i+j \in [k-\Delta,k]$. Further include the constraints on the ripple-carry adder carry-bits and sum-bits $c^{{x^2+x}}_{i},d^{{x^2+x}}_i$ for $i \in [k-\Delta,k]$. Lastly, add constraints to $\phi_{\textrm{Strip}}(k)$ that encode the values of the bits: $e_k=1, e_{k-1}=0, \ldots, e_0=0$.

We refer to the subcircuit $\phi_{\textrm{Strip}}(k) \cap C$ as the \emph{critical strip} for $C$. Figure~\ref{fig:xpl1cut} shows an example of a critical strip.
\end{definition}

\begin{lemma}
\label{unsat_xpl1}
$\phi_{\textrm{Strip}}(k)$ is unsatisfiable for all $k$
\end{lemma}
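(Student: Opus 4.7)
I plan to mirror the integer-weight arguments behind Lemmas~\ref{unsatStrip} and~\ref{unsat_dist}: view each critical strip as a subcircuit carrying a weighted integer value, use the ring identity $x(x+1)=x^2+x$ to equate the in-window contributions of the two sides, and then bound the residual boundary effect to contradict the forced values of $\mathbf{e}$.

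The key observation is that $L_{x+1}$ is included in full inside $\phi_{\textrm{Strip}}(k)$, so the bitvector $\mathbf{x+1}$ feeding $L_{x(x+1)}$ is completely determined by $\mathbf{x}$ and contributes no boundary slack of its own. Consequently, the weighted sum of $L_{x(x+1)}$'s tableau bits $t^{L_{x(x+1)}}_{i,j}$ with $i+j\in[k-\delta,k]$ equals the contribution of those columns to the integer $x(x+1)$, up to input carries entering column $k-\delta$. On the $R$ side, the tableau of $R_{x^2}$ together with the included carry- and sum-bits of $R_{x^2+x}$ realises the contribution of the same columns to $x^2+x$, again up to a few input carries into column $k-\delta$. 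The identity $x(x+1)=x^2+x$ makes these in-window contributions coincide as integers.

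The discrepancy between the two strip outputs is therefore carried entirely by input carries at column $k-\delta$: fewer than $n$ from $L_{x(x+1)}$, fewer than $n$ from $R_{x^2}$, and one from $R_{x^2+x}$, for a total of at most $2n-1$ carries of weight $2^{k-\delta}$. With $\delta=\log(2n-1)$, the induced discrepancy between the two strip outputs is too small to realise the $2^k$ difference at bit $k$ demanded by the unit clauses $e_k=1,\,e_{k-1}=\cdots=e_{k-\delta}=0$ together with the included $e_i$ definitions on the in-window output bits, giving the desired contradiction.

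The chief obstacle is careful boundary bookkeeping: I must check that the full inclusion of $L_{x+1}$ really does eliminate any extra boundary term from the ``plus one'' input (which is what makes $\delta=\log(2n-1)$ suffice rather than something larger), that output carries leaving column $k$ do not leak additional slack into the relevant bits, and that the portion of $R_{x^2+x}$ outside the window introduces no further uncertainty beyond the single input carry already counted. This tight accounting of boundary terms on both the input side of column $k-\delta$ and the output side of column $k$ is where the argument is most delicate.
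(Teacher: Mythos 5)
Your proposal takes the same route as the paper, which proves this lemma by the identical weighted-sum argument used for distributivity (Lemma~\ref{unsat_dist}): the in-window tableau weights of the two sides coincide via the identity (with $L_{x+1}$ included in full so the $\mathbf{x+1}$ input adds no boundary slack), and the only discrepancy comes from input carries entering column $k-\delta$. One detail to repair in your final step: the discrepancy between the two strip outputs should be bounded by the \emph{larger} of the two sides' carry totals --- each side's output deviates from the common tableau value only by its own carries, so the difference is at most $\max(C_L,C_R)\leq n\cdot 2^{k-\delta}<2^k$ --- whereas summing the carries over both sides as you do gives $(2n-1)\cdot 2^{k-\delta}=2^k$ exactly when $\delta=\log(2n-1)$, which is not strictly below $2^k$ and therefore does not by itself yield the contradiction.
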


\begin{proof}
The proof is the same as the proof for Lemma~\ref{unsat_dist}.
\end{proof}

\begin{definition}
For an array multiplier computing the expression $C \in \{x(x+1), x^2\}$ and $j \in [1,(k-\Delta)/2]$ we define $\cut^C(j)$ to be the set of variables
\[d^{C}_{i,j-1} \quad \text{:} \quad i+j-1 \in [k-\Delta,k], \quad \textrm{(upper cut)} \]
\[ c^{C}_{j-1,i}, d^{C}_{j,i} \quad \text{:} \quad i+j \in [k-\Delta,k]. \quad \textrm{(lower cut)} \]
We define $\cut^{{x+1}}(j)$ to contain $x_{j-1}$ and the set of variables
\[x_i \quad \text{:} \quad i \in [k-j-\Delta,k-j].\]
\end{definition}

\begin{figure}[t]
\centering 
\includegraphics[width=0.9\textwidth]{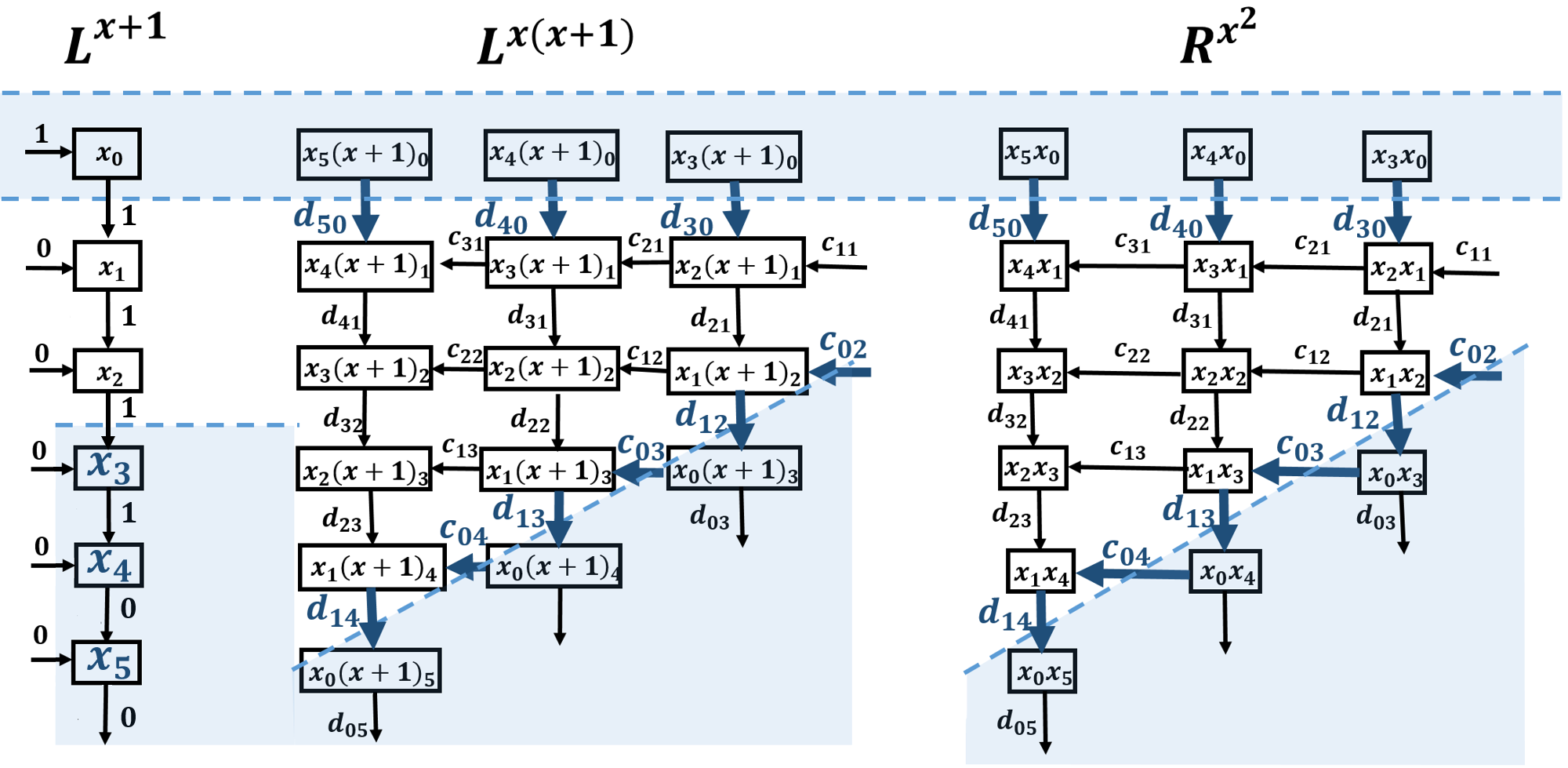}
\caption{The critical strip $\phi_{\textrm{Strip}}(5)$ for checking $x(x+1) = x^2 + x$. The shaded region is satisfiable. The enlarged variables belong to $\cut(1)$.}
\label{fig:xpl1cut}
\end{figure}

\begin{theorem}
There is a size $n^7 \log n$ regular resolution proof 
that $\phi_{\textrm{Strip}}(k)$ is unsatisfiable.
\end{theorem}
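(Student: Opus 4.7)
The plan is to construct a labeled read-once branching program $B$ for the conflict clause search problem on $\phi_{\textrm{Strip}}(k)$, following the template from Lemma~\ref{thm:distStrip}, but scanning the critical strips of $L_{x(x+1)}$ and $R_{x^2}$ simultaneously from both the top row and the bottom row, meeting in the middle. The cut definitions already anticipate this: $\cut^C(j)$ consists of both an ``upper cut'' of sum variables $d^C_{i,j-1}$ along row $j-1$ and a ``lower cut'' of carry/sum variables $c^C_{j-1,i}, d^C_{j,i}$ (which the reader should interpret as living at row $k-j$ of the multiplier, by the usual reindexing). The $\cut^{L_{x+1}}$ component keeps a sliding window of the last $\delta+1$ bits of $\mathbf{x}$ that have been read so far.

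First, I would initialize by branching on the input bits required to propagate through row $0$ of both multipliers and through row $k$ of both multipliers, as well as through the last $\delta+1$ positions of the ripple-carry adder $R_{x^2+x}$. This gives initial assignments to $\cut^{L_{x(x+1)}}(1)$, $\cut^{R_{x^2}}(1)$, and $\cut^{L_{x+1}}(1)$, after which nodes are merged on the common assignment. Then at each inductive step, moving from $\cut(j)$ to $\cut(j+1)$, I branch only on the \emph{fresh} $\mathbf{x}$ bits needed: these are $x_j$ (the new second-operand bit for the upper scan), $x_{k-j}$ (the new second-operand bit for the lower scan), and at most two new first-operand bits entering the sliding windows of $L_{x+1}$ and the multipliers. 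Because we stop at $j = (k-\delta)/2$ and the sliding windows have width $\delta = \log(2n-1)$, the first-operand windows for the two scans remain disjoint throughout, so no $x_i$ is ever branched on twice. After branching, we propagate through the appropriate row of $L_{x+1}$, the new tableau bits of $L_{x(x+1)}$ and $R_{x^2}$, and the adder $R_{x^2+x}$, yielding an assignment to $\cut(j+1)$ on which we merge.

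When the upper and lower scans meet, both multiplier output bitvectors $\mathbf{o}^{L_{x(x+1)}}$ and $\mathbf{o}^{R_{x^2+x}}$ in the strip have been propagated from a single consistent assignment to the inputs of the strip; by Lemma~\ref{unsat_xpl1} this must conflict with an inequality constraint from $E$. For the size bound, each cut consists of $O(\log n)$ variables (upper plus lower cuts for the two multipliers, plus the sliding window in $L_{x+1}$), so there are at most $n^{O(1)}$ merged nodes per cut. Each step branches on $O(1)$ new input variables and propagates along a path of $O(\log n)$ new nodes before the merge. With $O(n)$ cuts total, the overall count comes out to $n^7 \log n$ after tracking constants in the cut sizes.

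The main obstacle is verifying that the meet-in-the-middle scan really is read-once on the $\mathbf{x}$ variables: $R_{x^2}$ forces every bit $x_i$ to appear both as a first operand (in tableau row $i$'s inputs across the strip window) and as a second operand (through $y_i = x_i$ in row $i$). The key invariant to maintain is that after reaching $\cut(j)$, the set of $\mathbf{x}$ variables branched on so far is exactly $\{x_0, \ldots, x_{j-1}\} \cup \{x_{k-j+1}, \ldots, x_k\}$ together with the sliding first-operand windows around columns $k-j$ and $j$, and that these sets stay disjoint because $2j < k-\delta$. Once this invariant is in place, the propagation steps and the size accounting proceed routinely as in Lemma~\ref{thm:distStrip}.
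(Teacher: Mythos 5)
Your high-level plan---a meet-in-the-middle, read-once scan of the two multiplier strips using the upper and lower cuts, with careful bookkeeping to keep the two sliding windows of $\mathbf{x}$ disjoint until $2j \approx k-\delta$---is the same as the paper's, and that part of your reasoning is sound. But there are two concrete gaps. First, your lower scan has no seed. The lower cut $\{c^{C}_{j-1,i}, d^{C}_{j,i}\}$ is not a horizontal row at height $k-j$: since $A^C_{j,i}$ sits in row $i$ and column $i+j$, it is the diagonal consisting of the $j$-th adder of each row counting from the output edge, and the strip's output bits $o^C_m = d^C_{0,m}$ emerge along that edge at rows $k-\delta$ through $k$. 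Data in the array flows toward that edge, so ``branching on the input bits required to propagate through row $k$'' determines nothing: each adder there has one known input (its tableau bit) and two unknowns among its remaining inputs and outputs, hence is underdetermined. The paper resolves this by branching at the outset on guesses for $\mathbf{o}^{L_{x(x+1)}}$, $\mathbf{o}^{R_{x^2+x}}$, and $o^{R_{x^2}}_i, x_i$ for $i\in[k-\delta,k]$; the lower scan then propagates \emph{backward} from these guessed outputs (two inputs plus the sum output of a full adder determine the third input and the carry), and the meeting point in the middle is where consistency with the input-driven upper scan is checked. Your closing claim that the output vectors ``have been propagated from a single consistent assignment to the inputs'' inverts the direction of this argument.

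Second, you do not handle the carry chain of $L_{x+1}$. The lower portion of $L_{x(x+1)}$'s tableau needs the high-order bits $(x+1)_i = x_i \oplus c^{L_{x+1}}_{i-1}$, and $c^{L_{x+1}}_{i-1}$ is the AND of all of $x_0,\ldots,x_{i-1}$---bits your upper scan has not yet read at that point. A sliding window of $\delta+1$ bits of $\mathbf{x}$ cannot supply these carries in a read-once fashion. The paper's fix is to observe that the carries of an increment form a monotone sequence $1\cdots 10\cdots 0$ (any violation $c_i=0$, $c_{j}=1$ with $i<j$ is refuted locally), to branch up front on all of the at most $n+1$ such sequences at the cost of one factor of $n$ in the size bound, and only then to run the two scans. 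Without both of these devices your branching program cannot actually reach $\cut(1)$, let alone the middle.
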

\begin{proof}

\emph{Initialization.}
We give our proof in the form of a labeled read-once branching program $B$. We begin by branching on a guess for the critical strip outputs $\mathbf{o}^{{x(x+1)}}, \mathbf{o}^{{x^2+x}}$. For the branches that don't conflict with an inequality-constraint, we branch on the values
\[o^{{x^2}}_i, x_i \quad : i \in [k-\Delta,k], \]
then merge to erase the assignment to $\mathbf{o}^{{x^2+x}}$.

We observe that the carry variables in $L^{x+1}$ must be a sequence of $1$s followed by $0$s. If, on the contrary, we observe the assignments $c_i = 0$ and $c_j = 1$ for $i < j$, then we can efficiently find a conflict by propagating $c_i = 0$ through columns $[i,j]$. So we can begin this proof by branching on the at most $n$ valid carry-bit assignments
\[ c^{x+1}_0 = 1, \ldots,c^{x+1}_i = 1, c^{x+1}_{i+1} = 0,\ldots, c^{x+1}_{k}=0 .\]

Our branch order begins on the input variables $x_0$ and $x_k,x_{k-1},\ldots,x_{k-\Delta}$. We propagate the resulting assignment to the upper and lower cuts in each circuit, then merge on the assignment to $\cut(1)$.  

\paragraph{Inductive Step}

To get from $\cut(j)$ to $\cut(j+1)$, we branch on input variables $x_{j}, x_{k-j-\Delta+1}$, then propagate to and merge on $\cut(j+1)$.

We have two cases: the upper and lower cuts of $\cut(j+1)$ either intersect or they do not. In either case we branch on input variables $x_{j-1},x_{k-\Delta-j+1}$ and the input carry variables to rows $j$ and $(k-j-\Delta+1)$.  If the cuts do not intersect, we propagate to, then merge on, all the $\cut(j+1)$ variables. Otherwise, suppose that the upper and lower cuts of $\cut(j+1)$ intersect on $d_{i,j}$. The upper and lower cuts of $\cut(j)$ either propagate to conflicting values of $d_{i,j}$, in which case we have found a conflict, or they agree on the value of $d_{i,j}$, in which case we delete column $i+j$ from our cuts.

\paragraph{Size Bound}

Each cut belongs to one of up to $n$ branches for the carry variables in $L^{x+1}$ and holds an assignment to at most $7 \log n$ variables so there are at most $n^8$ initial nodes for each cut. Each of these nodes propagates for $O(\log n)$ steps to get to the next cut, so our branching program has size $O(n^9 \log n)$.
\end{proof}

We can now obtain a refutation for $\phi^{\textrm{Array}}_{x(x+1)}(n)$ by branching on sequences of variables in $\mathbf{e}$ and using the refutation for $\phi_{\textrm{Strip}}(k)$ on each branch.

\begin{theorem}
There is a size $n^{10} \log n$ regular resolution proof that the SAT instance $\phi^{\textrm{Array}}_{x(x+1)}(n)$ is unsatisfiable.
\end{theorem}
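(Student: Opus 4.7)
The plan is to follow the same top-level scaffolding used for commutativity and distributivity (cf.\ Theorem~\ref{thm:arrayDist}), combining a narrow spine of branches on the inequality variables with the strip refutation just established. Concretely, I would build a read-once branching program $B$ for $\phi^{\textrm{Array}}_{x(x+1)}(n)$ whose root traverses $e_0, e_1, \ldots, e_{2n-1}$ in order, seeking the smallest index $k$ with $e_k = 1$. This spine has $O(n)$ nodes; the all-zeros path immediately conflicts with the clause $e_0 \vee e_1 \vee \cdots \vee e_{2n-1}$, while every other path terminates in one of $2n$ sibling nodes labelled by the partial assignment $\sigma_e(k) = \{e_0 = 0, \ldots, e_{k-1} = 0,\ e_k = 1\}$.

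At each such sibling node, absorbing $\sigma_e(k)$ as unit clauses into the remaining formula yields a set of constraints containing $\phi_{\textrm{Strip}}(k)$, which is unsatisfiable by Lemma~\ref{unsat_xpl1}. I would splice in the $O(n^7 \log n)$-size read-once branching program from the previous theorem, with every spliced node augmented by $\sigma_e(k)$. Because the $\mathbf{e}$ variables occur in no non-unit constraint of $\phi_{\textrm{Strip}}(k)$ and the input variables $\mathbf{x}$ only occur inside a single spliced sub-program along any root-to-leaf path, the combined program is still read-once, so by Theorem~\ref{thm:prover_adv} it corresponds to a regular resolution refutation of $\phi^{\textrm{Array}}_{x(x+1)}(n)$.

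The total size is $O(n) + 2n \cdot O(n^7 \log n) = O(n^8 \log n)$, matching the claimed bound. There is no genuine obstacle here; the real work was already done in the previous theorem, where the cuts were arranged to scan the critical strip from both ends simultaneously so as to avoid re-reading any $x_i$. The only point worth confirming in writing is that read-onceness is preserved under the splice—a direct check given that each $\phi_{\textrm{Strip}}(k)$ refutation lives on a distinct branch of the spine and queries only variables internal to its own strip together with (freshly queried) input bits.
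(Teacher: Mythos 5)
Your proposal is correct and matches the paper's own argument, which likewise branches on the prefixes $\sigma_e(k)$ of the inequality variables and then invokes the $O(n^7\log n)$ strip refutation on each of the $2n$ branches. Your explicit check that read-onceness survives the splice is a detail the paper leaves implicit, but the route is the same.
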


\subsection{Degree Two Identity Proofs for Array Multipliers}

Let $\phi^{\mathrm{Array}}_{L=R}(n)$ denote a SAT instance checking that the array multiplier obeys the ring identity $L=R$. With the insight from the earlier proofs in this section, we can prove the general theorem:

\begin{theorem}
\label{thm:deg2array}
For any degree two ring identity $L=R$, there are polynomial size regular refutations for $\phi^{\mathrm{Array}}_{L=R}(n)$.
\end{theorem}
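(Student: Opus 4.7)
The plan is to extend the three-ingredient recipe used for commutativity, distributivity, and $x(x{+}1)=x^2+x$ to an arbitrary degree two ring identity $L=R$. After normalization via distributivity and associativity, both $L$ and $R$ reduce to the same formal polynomial, so each side is a sum of a constant number $c=c(L,R)$ of bilinear products of (possibly pre-combined) bit-vectors, together with linear and constant terms. I would bit-blast each side using an array multiplier for every bilinear product and ripple-carry adders to form the pre- and post-sums, append the standard inequality constraints $E$ on the output bits, and obtain an instance of size $O(n^2)$.

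As in the previous subsections, the refutation would begin by branching on the position $k$ of the lowest disagreeing output bit, reducing the task to refuting each critical strip $\phi_{\mathrm{Strip}}(k)$ consisting of the constraints on tableau columns $[k-\delta,k]$ of every multiplier, the adder columns in that range, and the unit clauses forcing $e_0{=}\cdots{=}e_{k-1}{=}0$, $e_k{=}1$. A weight-counting argument of the same shape as Lemmas~\ref{unsatStrip},~\ref{unsat_dist}, and~\ref{unsat_xpl1} shows that, for $\delta = O(\log n)$ with a constant depending on $c$, each $\phi_{\mathrm{Strip}}(k)$ is unsatisfiable: the total weight entering the strip from the right is at most $O(cn)\cdot 2^{k-\delta}$, less than the $2^k$ mismatch forced by the unit clauses on $\mathbf{e}$.

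The heart of the argument is to construct, for each $k$, a read-once branching program that sweeps the strip row-by-row and maintains at each step an assignment to a cut of $O(\log n)$ boundary variables per subcircuit, exactly as in Lemmas~\ref{tabProof} and~\ref{thm:distStrip}. The main obstacle is choosing, for each multiplier, which of its two input axes is scanned (read row-by-row) and which is held in a sliding $\delta$-window, so that every original input bit is queried only once globally. Two multipliers that share an input bit-vector must place that bit-vector on the same axis so it appears in the window of both, or in the row sweep of both, generalizing the sliding $\mathbf{x}$-window used for distributivity; a multiplier whose two factors are the same bit-vector (as in $x^2$) forces the meet-in-the-middle strategy from the $x(x{+}1)$ proof, with the two halves reconciled on a single shared cut variable per column.

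To handle the general case I would build the auxiliary graph whose vertices are the multipliers and whose edges record shared input bit-vectors, and assign axes consistently across the graph. When a globally consistent axis assignment fails, the strip can be split vertically into a constant (in $c$) number of sub-strips that are scanned independently, with the boundary variables between sub-strips absorbed into the cut by explicit branching. In all cases $c$, $\delta$, and the per-subcircuit cut size are $O(\log n)$ with constants depending only on the identity $L=R$, so each strip admits a read-once branching program of size $n^{O(1)}$; combining this with the $2n$ outer branches on $k$ and invoking Theorem~\ref{thm:prover_adv} yields the claimed polynomial-size regular resolution refutation of $\phi^{\mathrm{Array}}_{L=R}(n)$.
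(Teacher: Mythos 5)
Your setup — critical strips of width $O(\log n)$, the weight-counting unsatisfiability argument with a constant depending on the number of terms, and a row-by-row sweep maintaining small cuts — matches the paper's proof sketch. The gap is in the mechanism you propose for making the sweep read-once across multipliers that share inputs. Your primary strategy assigns each bit-vector to an ``axis'' (swept vs.\ windowed), but each multiplier forces its two factors onto \emph{opposite} axes, so a globally consistent assignment is exactly a proper 2-coloring of the sharing graph. This fails not just for squares but for any odd cycle of shared inputs, which already arises for identities as simple as $(x+y)(y+z) = xy + xz + y^2 + yz$: here $x$ opposite $y$, $x$ opposite $z$, and $y$ opposite $z$ cannot all hold. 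Your fallback of splitting the strip vertically into sub-strips does not repair this, because the obstruction is not geometric within the strip: a single input bit $x_i$ feeds tableau rows of several multipliers at incompatible points of their respective scans, so however you partition the strip, some bit-vector must be queried in two different orders, violating read-once-ness (and hence regularity via Theorem~\ref{thm:prover_adv}).

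The paper's resolution is to abandon axis assignment entirely and impose one \emph{symmetric} query order $x_0, x_n, x_1, x_{n-1}, \ldots$ on every input bit-vector simultaneously, so that every multiplier is scanned meet-in-the-middle from both the top and bottom rows toward the center, with diagonal cuts and a sliding window of the $2\delta$ most recently read bits of each bit-vector. Because this order is invariant under swapping a multiplier's two inputs, it is automatically compatible with any sharing pattern, including squaring, with no case analysis on the sharing graph. You already have this tool in hand from the $x(x+1)=x^2+x$ argument; the missing step is recognizing that it should be applied uniformly to all multipliers rather than reserved for the self-product case.
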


\begin{proof}
(Sketch) We divide $\phi^{\mathrm{Array}}_{L=R}(n)$ into unsatisfiable critical strips of width $\Delta = \log mn$, where $m$ is the number of terms in the identity $L=R$. The ripple-carry adders that input to a multiplier remain intact, and for the rest we remove the columns outside the critical strip.

We begin by branching on guesses for the $\Delta$ output bits from each multiplier and each truncated ripple-carry adder. In each multiplier we use a "meet-in-the-middle" strategy, similar to the proof for $x(x+1) = x^2+x$. We read all the input bitvectors in parallel, each in the same order. This branch order for each input bitvector $\mathbf{x}$ is $x_0,x_n,x_1,x_{n-1},\ldots$.
We branch on the input carry-bits as needed to propagate the cuts. We can propagate the resulting input variable assignments to diagonal cuts in each multiplier that scan from the top and bottom edges towards the middle, and likewise for the intact ripple-carry adders. In each input bitvector we remember the assignment to just the most recently queried $2\Delta$ variables. Because of the symmetry of this variable order, it is compatible with swapping the order of inputs to any multiplier, as well as multipliers squaring an input.
\end{proof}

\section{Diagonal Multipliers and Booth Multipliers}
\label{sec:diagonal}
A diagonal multiplier uses a similar idea to the array multiplier. 
The difference is that the diagonal multiplier routes its carry bits to the next row instead of the same row as depicted in Figure~\ref{fig:diagonal}. 

A Booth multiplier uses a similar idea to the array multiplier, but uses two's complement notation and a telescoping sum identity to skip consecutive digits in one multiplicand. To add the terms of this sum, the Booth multiplier uses a grid of full adders similarly to the array multiplier,  but with some small modifications to accommodate signed integers.

Like with the array multiplier, we can divide the diagonal and Booth multipliers into $O(\log n)$-width unsatisfiable critical strips. Using the same input variable orderings from~Section~\ref{sec:array} we can verify each of these critical strips with a polynomial-size regular resolution proof.

\begin{definition}
Let $\phi^{\mathrm{Diag}}_{L=R}(n)$ denote the SAT instance checking that an $n$-bit diagonal multiplier obeys the ring identity $L=R$. Likewise let $\phi^{\mathrm{Booth}}_{L=R}(n)$ denote the SAT instance checking that an $n$-bit Booth multiplier obeys the ring identity $L=R$
\end{definition}

\begin{theorem}
For any degree two ring identity $L=R$, there are polynomial size regular resolution proofs for $\phi^{\mathrm{Diag}}_{L=R}(n)$ and $\phi^{\mathrm{Booth}}_{L=R}(n)$
\end{theorem}

\begin{proof}
(Sketch) We divide $\phi^{\mathrm{Diag}}_{L=R}(n)$ or $\phi^{\mathrm{Booth}}_{L=R}(n)$ into unsatisfiable critical strips of width $\Delta = \log mn$, where $m$ is the number of terms in the identity $L=R$.
This is the same width as in the array multiplier since the number of 
input carry-bits in each multiplier's critical strip is at most $n$. The ripple-carry adders that input to a multiplier remain intact, and for the rest we remove the columns outside the critical strip. We note that although the Booth multiplier uses two's complement signed integers, this does not materially affect our critical strip proofs.

We begin by branching on guesses for the $\Delta$ output bits from each multiplier and each truncated ripple-carry adder. We use the same branch order as in the array multiplier proof: each input bitvector $\mathbf{x}$ is read in parallel, in the order $x_0,x_n,x_1,x_{n-1},\ldots$. We branch on the input carry-bits as needed to propagate the cuts. We can propagate the input variable assignments to diagonal cuts in each multiplier that scan from the top and bottom edges towards the middle, and likewise for the intact ripple-carry adders. In each input bitvector we remember the assignment to just the most recently queried $2\Delta$ variables.
\end{proof}

\section{Wallace Tree Multipliers}
\label{sec:wallace}
\subsection{Wallace Tree Multiplier Construction}
\label{sec:wal_construct}

A Wallace tree multiplier takes a different approach to summing the tableau. Using carry-save adders (parallel 1-bit adders), it iteratively finds a new tableau with the same weighted sum as the previous tableau, but with $1/3$ fewer rows. Upon reducing the original tableau to just two rows, it uses a carry-lookahead adder to obtain the final result. In contrast to the array multiplier, a Wallace tree multiplier is complicated to lay out physically, but has only logarithmic depth. 

\paragraph{Carry-Lookahead Adder:}
\begin{figure}[t]
  \centering 
\includegraphics[width=0.9\textwidth]{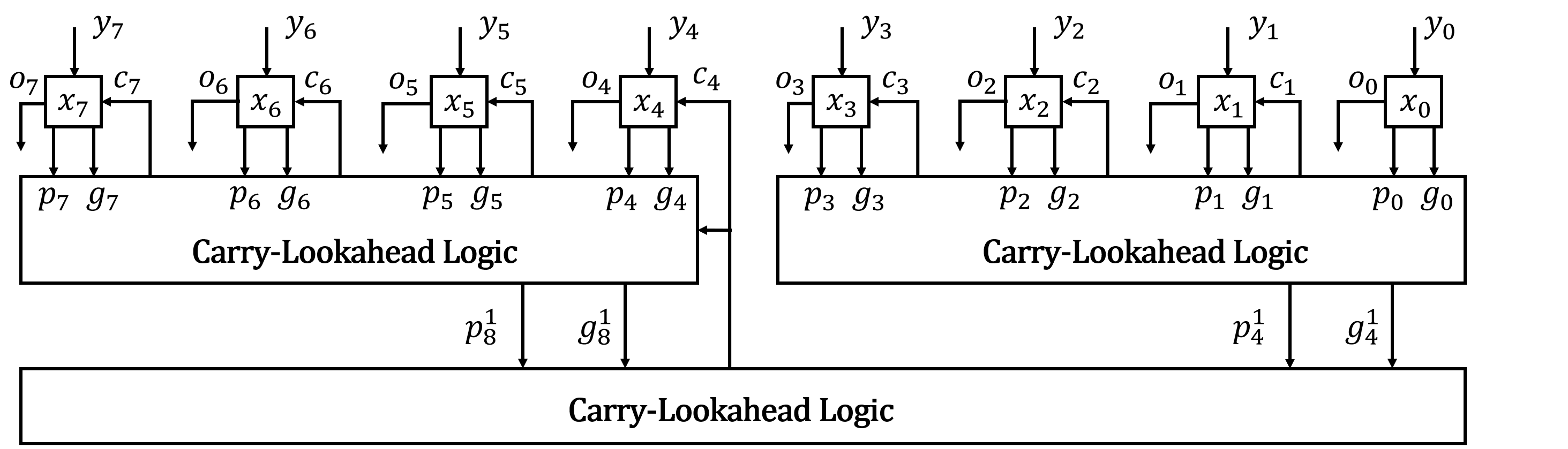}
\caption{8-bit, two-layer CLA adding $\mathbf{x,y}$.}
\label{fig:CLA}
\end{figure}
A carry-lookahead adder (CLA) uses a tree structure to add two bitvectors $\mathbf{x,y}$
with only logarithmic depth. The 4-bit CLA computes, for each pair $x_i,y_i$, the values
\[ g_{i} = x_i y_i \quad \quad p_{i} = x_i \oplus y_i. \]
Then, writing $c_i$ for the carry bit in the $i$-th column, we have
\[ c_{i+1} = g_{i} \oplus (p_{i} c_i) .\]
We can use this to derive the following equations, which we can use to compute each carry digit in parallel from the values $g_i,p_i$ and $c_0$:
\begin{eqnarray*}
\label{carry_eqns}
c_1 &=& g_{0} \oplus p_{0} c_0 \\
c_2 &=& g_{1} \oplus g_{0} p_{1} \oplus c_0 p_{0,0} p_{1} \\
c_3 &=& g_{2} \oplus g_{1} p_{2} \oplus g_{0} p_{1} p_{2} \oplus c_0 p_{0} p_{1} p_{2} \\
c_4 &=& g_{3} \oplus g_2 p_3 \oplus g_1 p_2 p_3 \oplus g_0 p_1 p_2 p_3  \oplus c_0 p_0 p_1 p_2 p_3.
\end{eqnarray*}
These values are used to compute the outputs: $o_i = c_i \oplus x_i \oplus y_i$.
It additionally computes the \emph{group propagate} and \emph{group generate}:
\begin{eqnarray*}
p_{1,4} &=& p_{3} p_{2} p_{1} p_{0} \\
g_{1,4} &=& g_{3} \oplus g_{2} p_{3} \oplus g_{1} p_{3} p_{2} \oplus g_{0} p_{3} p_{2} p_{1},
\end{eqnarray*}
where the first index indicates the layer.

We construct a $16$-bit CLA with $2$ layers, whose first half of is shown in Figure~\ref{fig:CLA}. At the zero-th layer we arrange four 4-bit CLAs, the $k$-th CLA taking inputs $x_i,y_i, i \in [4k,4k+3]$ and outputting to $p_{0,i},g_{0,i}, i \in [4k,4k+3]$, where the superscript indicates the layer. We denote the $k$-th CLA group propagate and generate by $p_{1,4k} g_{1,4k}$. Then the carries $c_4,c_8,c_{12},\ldots$ can be computed by the equations
\begin{eqnarray*}
c_4 &=& g_{1,0} \oplus p_{1,0} c_0 \\
c_8 &=& g_{1,4} \oplus g_{1,0} p_{1,4} \oplus c_0 p_{1,0} p_{1,4} \\
c_{12} &=& g_{1,8} \oplus g_{1,4} p_{1,8} \oplus g_{1,0} p_{1,4} p_{1,8} \oplus c_0 p_{1,0} p_{1.4} p_{1,8} \\
c_{16} &=& g_{1,12} \oplus g_{1,8} p_{1,12} \oplus g_{1,4} p_{1,8} p_{1,{12}} \oplus p_{1,0} p_{1,4} p_{1,8} p_{1,12}  \oplus c_0 p_{1,0} p_{1,4} p_{1,8} p_{1,12}. \\
\end{eqnarray*}
Notice that these equations are isomorphic to the previous equations for computing carries within each 4-bit CLA. We can reuse the same circuitry from the 4-bit CLA to compute these carries, as well as the group propagate and generate for the next layer.
We can repeat this process to construct larger CLAs, with each iteration able to handle four times the bitwidth.

\paragraph{Wallace Tree Multiplier:}

\begin{figure}[t]
  \centering 
\includegraphics[scale = 0.22]{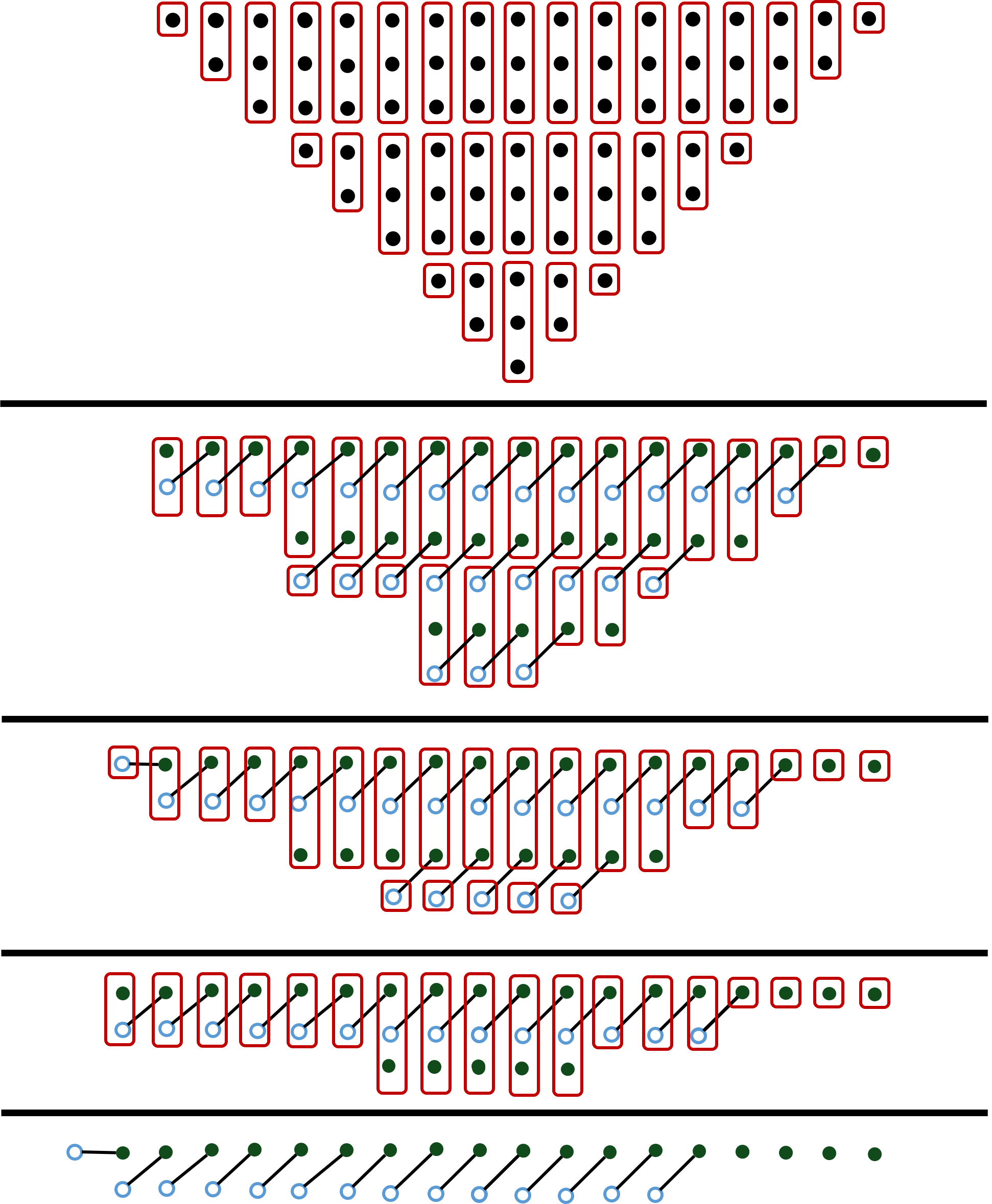}
\caption{Dot diagram for a $9\times 9$ Wallace tree multiplier. Hollow dots represent carry-bits and solid dots represent sum-bits. Dots connected by an edge are output by the same adder.}
\label{fig:wallaceDotDiagram}
\end{figure}

We construct a Wallace tree multiplier taking input $(\mathbf{x},\mathbf{y})$. We compute a tableau of partial products like in the array multiplier. We then go through $h \approx \log n$ steps to reduce the $n$-row starting tableau to an equivalent $2$-row tableau.

We define \emph{tableau variables} $t_{\ell,i,j}$ where $\ell$ is the layer of the tableau, $i$ is the index of the column containing the adder and $j$ is the row. We will denote the set of tableau variables in a column by
\[ \col(i) = \{t_{\ell,i,j} \quad \textrm{for all} \quad \ell,j \}, \]
and call the subset of a column within a layer $l$ 
a \emph{subcolumn}, denoted by
\[ \col(\ell,i) = \{t_{\ell,i,j} \quad \textrm{for all} \quad j \}. \]
In the zero-th layer, the tableau variables represent the partial products:
\[ t_{0,i,j} = x_{i-j} \wedge y_{j}  \quad \textrm{for} \quad i < n,\]
\[ t_{0,i,j} = x_{n-1-j} \wedge y_{i-n+j+1}  \quad \textrm{for} \quad i \geq n.\]
We now specify how to construct layer $\ell+1$ from layer $\ell$. We partition the rows of layer $\ell$ into sets of three, from top to bottom. Adder $A_{\ell,i,j}$ will take input from the $i$-th column of the $j$-th set of three rows. For each row of adders $j = 0, 1, \ldots$, for each $i \in [0,2n]$, we append adder $A_{\ell,i,j}$'s sum-bit to subcolumn $\col(\ell+1,i)$. Then for each $i$, we append adder $A_{\ell,i,j}$'s carry-bit to subcolumn $\col(\ell+1,i+1)$.

Each layer reduces the number of rows in the tableau from $N$ to $\lceil 2N/3 \rceil$. The tableau for the last layer $h < \log_{3/2}(n) < 2 \log n$, will only have two rows. We use a $2n$-bit\footnote{This is not a $(2n-1)$-bit adder because the top summand may have $2n$ bits.} carry-lookahead adder (CLA) to sum the two rows in logarithmic depth, outputting the final sum in the output bits $o_i$.

Like the proofs for array multipliers, our proofs for Wallace tree multipliers
divide the instance into critical strips. 
In fact, our proofs branch on the input tableau in the same row-by-row order in both array and Wallace tree multipliers. 
However the size of the resulting cuts is $O(\log ^2 n)$ for Wallace tree multipliers rather than the
$O(\log n)$ size cuts for array multipliers. 
This cut size results in quasipolynomial size regular resolution
proofs. 

When analyzing the cuts in a Wallace tree multiplier, we will find the following property useful:

\begin{definition}
For layer $\ell$ of a Wallace tree multiplier, if for each $j \leq k$, the outputs of $j$-th row of adders, $\{A_{\ell,i,j}\}_i$, map to and cover the rows $2j,2j+1$ of the next layer $\ell+1$'s tableau, we say that layer $\ell$ is \emph{row-friendly} up to its $k$-th row of adders. If layer $\ell$ is row-friendly up to its last row of adders, we say that layer $\ell$ is \emph{row-friendly}.
\end{definition}

\begin{lemma}
\label{thm:nice_adders}
\label{THM:NICE_ADDERS}
In a Wallace tree multiplier, each layer $\ell \in [0,h-2]$ is row-friendly.
\end{lemma}

In terms of the dot diagram in Figure~\ref{fig:wallaceDotDiagram}, this Lemma simply states that no two bits are connected with a line of slope greater than one.

\subsection{Proofs of Wallace Tree Multiplier Commutativity}

\begin{definition}
We define a SAT instance $\phi^{\mathrm {Wall}}_{\mathrm{Comm}}(n)$. The inputs to the multipliers are $n$-bit integers $\mathbf{x,y}$. Using the construction from Section~\ref{sec:wallace}, we define Wallace tree multipliers $L$, computing $\mathbf{xy}$, and $R$, computing $\mathbf{yx}$ (reversing the order of multiplier inputs).

After specifying the circuits $L$ and $R$, we add a circuit $E$, of of \emph{inequality-constraints} encoding that the two circuits disagree on some output bit.
\end{definition}

\begin{definition}
Define $\delta = \log(n+2)$. Let $\phi_{\textrm{Strip}}(k)$ contain
the constraints from $\phi^{\mathrm {Wall}}_{\mathrm{Comm}}(n)$ that contain a tableau variable $t^{xy}_{\ell,i,j}$ or $t^{yx}_{\ell,i,j}$ for $i \in [k-\delta,k]$, and also the constraints for the full CLAs at the end of the Wallace tree multipliers. Also add unit clauses to $\phi_{\textrm{Strip}}(k)$ for the assignment: $e_0=0, e_1=0, \ldots,e_{k-1}=0, e_k=1$.

We call the newly unconstrained tableau bits in column $k-\delta$, that were carry-bits output by adders from the removed column $k-\delta-1$, the \emph{input carry-bits} to $\phi_{\textrm{Strip}}(k)$.
\end{definition}

\begin{lemma}
$\phi_{\textrm{Strip}}(k)$ is unsatisfiable for all $k$.
\label{wallace_unsat}
\end{lemma}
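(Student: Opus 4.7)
The plan is to mirror the argument of Lemma~\ref{unsatStrip} by interpreting the strip as an arithmetic computation and bounding the discrepancy in the low $k+1$ output bits between $L$ and $R$. For $i \in [k-\delta, k]$, AND-commutativity gives $\sum_j t^L_{0,i,j} = \sum_{a+b=i} x_a y_b = \sum_j t^R_{0,i,j}$, so $\#\col^L(0,i) = \#\col^R(0,i)$, and Equations~\ref{eqn:col_recurrence} propagate this column-size equality to every layer. Define $W^C_l = \sum_{i=k-\delta}^{k} 2^i V^C_l(i)$, the weighted value of strip columns at layer $l$, where $V^C_l(i)$ is the integer value of column $i$. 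Then $W^L_0 = W^R_0$.

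Applying the adder identity $a_0 + a_1 + a_2 = d + 2c$ to all adders in columns $[k-\delta, k]$ at layer $l$ yields
\[ W^C_{l+1} = W^C_l + 2^{k-\delta} I^C_l - 2^{k+1} O^C_l, \]
where $I^C_l$ and $O^C_l$ count the carry-bits entering the strip from column $k-\delta-1$ and exiting from column $k$ at layer $l$, respectively. Reducing mod $2^{k+1}$ kills the out-carry term, and telescoping over $l \in [0, h-1]$ gives $W^C_h \equiv W^C_0 + 2^{k-\delta}\sum_l I^C_l \pmod{2^{k+1}}$. Since the full CLA is in the strip, $o^C \equiv v^C_0 + v^C_1 \pmod{2^{k+1}}$, where $v^C_0, v^C_1$ are the two final-layer rows as integers. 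Splitting each row into its strip part (columns $[k-\delta,k]$, combining to $W^C_h$) and its free part (columns $[0,k-\delta-1]$, whose sum $F^C$ lies in $[0, 2^{k-\delta+1})$) we obtain
\[ (o^L - o^R) \equiv 2^{k-\delta}\sum_l (I^L_l - I^R_l) + (F^L - F^R) \pmod{2^{k+1}}. \]

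The crux is to bound $\sum_l p_{l, k-\delta-1}$ across all $\Theta(\log n)$ layers. From Equations~\ref{eqn:col_recurrence} and the recurrence $\max_i\#\col(l+1,i) \leq (2/3)\max_i\#\col(l,i) + 2$, we obtain $\max_i \#\col(l,i) \leq n(2/3)^l + O(1)$, so a geometric summation yields $\sum_{l=0}^{h-1} p_{l, k-\delta-1} \leq n + O(\log n)$, bounding $|\sum_l (I^L_l - I^R_l)|$ by the same amount. Combined with $|F^L - F^R| < 2^{k-\delta+1}$, the right-hand side above has absolute value at most $2^{k-\delta}(n + O(\log n) + 2)$, which is strictly less than $2^k$ for $\delta = \log(2n) + \log\log n$ since $2^\delta = 2n\log n$ dominates $n + O(\log n)$. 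This contradicts the $\mathbf{e}$ unit clauses, which together with the $e_i$-defining constraints force $(o^L - o^R) \equiv 2^k \pmod{2^{k+1}}$. The main obstacle is precisely this geometric bound on the total carry count: a naive per-layer estimate of $O(n)$ would yield $O(n\log n)$, forcing a larger $\delta$ and weakening the overall proof; exploiting the $2/3$-decay of Wallace column sizes is what keeps $\delta$ at $\log(2n) + \log\log n$.
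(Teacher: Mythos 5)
Your proof is correct and follows the same route as the paper: interpret the strip as a weight-conserving computation, observe that the zero-th layer column sums agree between $L$ and $R$ by commutativity of AND, and bound the discrepancy contributed by the carry bits that enter column $k-\delta$ from the deleted columns, plus the low-order inputs to the final CLA. Your telescoping mod $2^{k+1}$ is a more careful rendering of what the paper states informally. The one thing you have backwards is the closing claim that the geometric $2/3$-decay bound is ``the crux'' and that a naive per-layer estimate would force a larger $\delta$. The paper uses exactly the naive estimate: at most $n/3$ incoming carries per layer times fewer than $2\log n$ layers gives fewer than $\frac{2}{3}n\log n$ unconstrained carries, and the $+\log\log n$ term in $\delta=\log(2n)+\log\log n$ is there precisely to absorb that extra $\log n$ factor, since $2^{\delta}=2n\log n > \frac{2}{3}n\log n + O(\log n)$. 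Your sharper geometric summation giving $n+O(\log n)$ total carries is a genuine (if unneeded) improvement --- it would let one shrink $\delta$ to $\log(2n)+O(1)$ and shave a factor off the cut sizes in Lemma~\ref{thm:walStrip} --- but it is not required to make the stated $\delta$ work.
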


\begin{proof}
  We reason similarly to the proof of Lemma~\ref{unsatStrip}. Again, we interpret the critical strip as a circuit that computes the weighted sum, in both $L$ and $R$, of the tableau variables within the strip. The assignment to $\mathbf{e}$ asserts that the outputs of $L$ and $R$ differ by precisely $2^k$. We bound the admissible difference in outputs by counting the number of input carry-bits in either $L$ or $R$. Since each layer of a Wallace tree multiplier has $\lceil 2/3 \rceil$ fewer rows than the previous layer, the total number of tableau rows past the initial layer is at most $2n$. At most half of these rows are composed of carry-bits, so circuits $L$ and $R$ each have at most $n$ input carry-bits coming from the removed column $k-\delta-1$. Additionally, the newly unconstrained inputs to the final CLA from the removed columns can contribute a total weight of at most $2^{k-\delta}$ to the final output. Since we set $\delta = \log(n+2)$, the total difference between the final outputs is at most $2^{k-\delta}(n+2) < 2^k$.
\end{proof}

\begin{lemma}
\label{thm:walStrip}
There is a regular resolution proof of size $2^{8 \log^2 n + O(\log n)}$ that $\phi_{\textrm{Strip}}(k)$ is unsatisfiable.
\end{lemma}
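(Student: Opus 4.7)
The plan is to mimic the proof of Lemma~\ref{tabProof}, building a labeled read-once branching program $B$ that solves the conflict clause search problem for $\phi_{\textrm{Strip}}(k)$, but now carrying a cut in each of the $O(\log n)$ layers of each Wallace tree. As a preprocessing step I would resolve on $\mathbf{x},\mathbf{y}$ to replace the initial tableau variables of $R$ by those of $L$ via $t^{L}_{0,i,j}=t^{R}_{0,j,i}$, so that from then on the branching program queries only the adder variables inside the strips together with the $\mathbf{e}$-constraints and the CLA variables. Equivalently, the variables $\mathbf{x},\mathbf{y}$ are deferred to the very end of the variable ordering.

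Next, I would have $B$ branch on a guess for the $O(\delta)$ output bits of each circuit that lie in the critical strip (the remaining output bits are pinned by the unit clauses on $\mathbf{e}$), and then scan the initial tableau row-by-row from top to bottom. After processing rows $0,\ldots,j$ of the initial tableau, Lemma~\ref{thm:nice_adders} (row-friendliness), applied inductively across layers, guarantees that the adders fully determined so far in layer $l$ are exactly those of row index at most $\lfloor j/2^{l}\rfloor$, and that each layer $l$ of the tableau has become row-friendly up to that index. This means the ``frontier'' inside each layer is a single partially filled row spanning the $\delta+1=O(\log n)$ columns of the strip. I would therefore define a per-layer cut $\cut^{C}(l,j)$ consisting of (i) the variables currently sitting on that frontier row of layer $l+1$'s tableau (sum and carry bits of the just-processed adders), and (ii) the incoming carry-bits from column $k-\delta-1$ that have been branched on. Each such per-layer cut has size $O(\delta)$.

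The \emph{global} cut $\cut(j)=\bigcup_{C\in\{L,R\}}\bigcup_{l}\cut^{C}(l,j)$ thus has size $O(\delta\cdot\log n)=O(\log^{2}n)$. Between consecutive cuts I would branch on the $O(1)$ new initial tableau variables in row $j$ of $L$ (and, when needed, on the $O(\log n)$ incoming carry-bits from the removed column of each layer) and then propagate forward through every layer of both $L$ and $R$: the row-friendly structure ensures that each propagation step only consumes variables on the current frontier and produces variables on the next one, so no variable is queried twice. After the last row of the initial tableau is processed, both two-row final tableaux are fully determined inside the strip, and I would propagate through the final CLA layer by layer; its $O(\log n)$ internal levels each contribute another $O(\delta)$-size cut, still keeping the global cut size $O(\log^{2} n)$. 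At the very end we have assignments to both $\mathbf{o}^{L}$ and $\mathbf{o}^{R}$ that came from a single assignment to the strip inputs, so by Lemma~\ref{wallace_unsat} they conflict with one of the $\mathbf{e}$-clauses, giving the desired leaf.

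For the size bound: there are $\mathrm{poly}(n)$ cuts, each labeled by one of $2^{O(\log^{2}n)}$ assignments, and between two consecutive global cuts $B$ carries a tree of branchings of height $O(\log n)$ followed by $O(\log^{2}n)$ propagation nodes. The total node count is therefore $\mathrm{poly}(n)\cdot 2^{O(\log^{2}n)} = 2^{O(\log^{2}n)}$, which by Theorem~\ref{thm:prover_adv} translates into a regular resolution refutation of the same size. The main obstacle I foresee is bookkeeping: one must verify that the proposed per-layer cut truly separates the already-processed adders from the rest, despite the fact that carry-bits in the Wallace construction can cross column boundaries and that different layers advance their frontiers at different rates $j/2^{l}$. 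Proposition~\ref{prop:pyramid} together with Lemma~\ref{thm:nice_adders} is exactly what keeps all of these frontiers confined to a single row, so the cuts can be kept of width $O(\delta)$ per layer; a careful but routine case analysis on the boundary columns of the strip closes this off.
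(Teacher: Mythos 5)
Your overall plan is the paper's: identify $R$'s initial tableau with $L$'s, scan the initial tableau of the strip row by row, and use row-friendliness (Lemma~\ref{thm:nice_adders}) to argue that the propagation frontier in every layer of both trees stays within $O(1)$ rows, so each global cut holds $O(\delta\cdot\log n)=O(\log^2 n)$ variables and the branching program has $2^{O(\log^2 n)}$ nodes. Two details are off but harmless: in the Wallace indexing (layer, column, row) the identification should read $t^L_{0,i,j}=t^R_{0,i,i-j}$ (same column, reflected row) rather than $t^R_{0,j,i}$, which would move the variable to a different column and out of the strip; and the frontier in layer $l$ does not sit at row $\lfloor j/2^{l}\rfloor$ --- each adder consumes three rows and emits two, so the tableau frontier advances at rate roughly $(2/3)^{l}$. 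What actually matters, and what the paper proves, is only that each layer's adders become fully assigned in row order (reversed row order for $R$), so that at most a constant number of variables per subcolumn are live at any node.

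The genuine gap is the final carry-lookahead adder. You write that once the two-row tableaux are determined inside the strip you ``propagate through the final CLA layer by layer,'' but the carries $c_i$ for $i\in[k-\delta,k]$ are not determined by the strip's contents: the CLA is a tree over all $2n$ columns, and the carry entering the strip aggregates the generate/propagate signals of the $\Theta(n)$ columns below $k-\delta$, whose inputs $t^C_{h,i,0},t^C_{h,i,1}$ are free variables of $\phi_{\textrm{Strip}}(k)$. Propagation alone cannot produce the strip's output bits. The paper closes this with a separate read-once pass that branches on those out-of-strip CLA inputs column by column, folds them into group propagate/generate signals up the CLA tree, and merges so that only $O(\log n)$ such signals are ever held; Lemma~\ref{wallace_unsat} guarantees that \emph{every} such completion leads to a conflict with an inequality constraint, so the pass is sound. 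This extra scanning phase fits your framework and keeps the cut size at $O(\log^2 n)$, but as written your CLA step would fail.
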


\begin{proof}
The idea of this proof is to read the initial layer of the critical strip row-by-row. If we have assigned all of the inputs to a row of adders, we propagate to their output bits. In this way, an input assignment to $\mathbf{x}$ and $\mathbf{y}$ will propagate through the layers of the Wallace tree multiplier in parallel, then finally reach an assignment to the output bits of both circuits. From the proof of~\ref{wallace_unsat}, the result will contradict one of the inequality-constraints from $\phi^{\mathrm {Wall}}_{\mathrm{Comm}}(n)$. 

Each node of the branching program will only keep track of a constant number of variables in each subcolumn. This will ensure that the cuts have $O(\log^2 n)$ variables, so that the branching program has at most $2^{O(\log^2 n)}$ nodes.

We first preprocess the constraints to obtain the equalities $t^{xy}_{0,i,j} =  t^{yx}_{0,i,i-j}.$
Like in the array multiplier case, as we branch from the top tableau row downwards in circuit $L$, we will reveal the bottom row upwards in circuit $R$. We will first describe how the branching program $B$ propagates an assignment from the initial tableau to an assignment to the last layer in circuit $L$. The propagation in circuit $R$ works symmetrically, going from the bottom row of adders to the top in each layer. Then we will describe how to propagate an assignment to the last layer through the CLA to finally reach an assignment to the output bits.	

\begin{center}
\begin{minipage}[c]{0.9\textwidth}
\begin{algorithm}[H]
 \caption{Propagates from the initial layer $\ell=0$ to the final layer $\ell=h$ of the critical strip $L$ while assigning at most a constant number of bits per subcolumn.}
\begin{algorithmic}[1]
\label{wallace_alg1}
 \For{ $j = 0,1,\ldots,\lceil n/3 \rceil $}
  \State Branch on the inputs to the $j$-th row of adders $\{A^{xy}_{0,i,j}\}_i$.
  \For{ {\label{algstep0}}each layer $\ell = 0,1,\ldots, h-1$ before the last layer}
	\If{layer $\ell$ has a fully assigned row of adders $\{A^{xy}_{\ell,i,j'}\}_i$}
   		\State Propagate to tableau rows $2j', 2j'+1$ of layer $\ell+1$. {\label{algstep1}}
   		\State Merge to forget the assignment to the row of adders $\{A^{xy}_{\ell,i,j'}\}_i$. 
        \State Branch on any input carry-bits in tableau rows $2j', 2j'+1$ of layer $\ell+1$. {\label{algstep2}}
    \EndIf
  \EndFor
{\label{algstep3}}
\EndFor
\end{algorithmic}
\end{algorithm}
\end{minipage}
\end{center}

The branching program $B$ begins by following the Algorithm~\ref{wallace_alg1} on circuit $L$. We use the propagation loop in lines~\ref{algstep0}-\ref{algstep3} for circuit $R$, leaving the branching steps to circuit $L$. We claim that at the end, $B$ will reach an assignment to just the last layer of circuits $L$ and $R$. This will follow immediately from Lemma~\ref{wallace_lemma}.

\begin{lemma}
\label{wallace_lemma}
During the execution of Algorithm~\ref{wallace_alg1}, the tableau variables within each layer of circuit $L$ get assigned in row order from top to bottom. Furthermore, each tableau variable eventually receives an assignment.

Likewise, the tableau variables in each layer $\ell > 0$ of circuit $R$ get assigned in row order from bottom to top, and each tableau variable eventually receives an assignment.
\end{lemma}

\begin{proof}
We prove both properties in circuit $L$ by induction, making use of the row-friendliness of Wallace tree multipliers from Lemma~\ref{thm:nice_adders}. It is clear that the initial layer satisfies both properties. Suppose that layer $\ell-1$ satisfies both properties. Then its rows of adders $\{A^{xy}_{\ell-1,i,j'}\}_i$ get assigned to in ascending order with $j'=0,1,\ldots$. For each increment of $j'$, by row friendliness the steps~\ref{algstep1} and~\ref{algstep2} yield an assignment to all the variables in tableau rows $2j', 2j'+1$ of layer $\ell$. So layer $\ell$ gets assigned in row order from top to bottom, and each tableau variable in $\ell$ eventually receives an assignment.

The proof for circuit $R$ is symmetric, except the initial tableau is not assigned in horizontal rows, but rather diagonal rows. Nevertheless, the subsequent layer $\ell=1$ will still satisfy both desired properties and the induction argument may be used from there.
\end{proof}

\begin{corollary}
At the end of Algorithm~\ref{wallace_alg1}, the branching program $B$ reaches an assignment to precisely both rows in the last layer of circuits $L$ and $R$.
\end{corollary}

To propagate an assignment to last layer of $L$ or $R$ through the CLA, we will follow Algorithm~\ref{CLA_alg}. This algorithm will essentially perform a a post-order traversal of the full CLA tree. While it is not technically necessary to include the components of the CLA to the right of the critical strip, we have retained them for clarity.

\begin{center}
\begin{minipage}[c]{0.9\textwidth}
\begin{algorithm}[H]
 \caption{Propagates from the inputs to the critical strip outputs of the CLA while assigning at most a constant number of bits per CLA layer.}
\begin{algorithmic}[1]
\label{CLA_alg}
 \For{ $i = 0,1,\ldots, 2n $}
 \State Branch on any unassigned inputs to the $i$-th column: $t_{h,i,0}, t_{h,i,1}$.
 \While{ there is a pair of propagate and generate variables $p_{\ell,i'},g_{\ell,i'}$ with all their input variables assigned.}
 \State Propagate to $p_{\ell, i'},g_{\ell,i'}$ while merging to forget their input bits.
  \State Merge to forget the carry-bits computed by the CLA that output $p_{\ell, i'},g_{\ell,i'}$.
  	 \State Propagate to each carry-bit with all its input variables assigned.
     \State Propagate to each critical strip output bit with all its inputs assigned.
  
  \EndWhile
 \EndFor
\end{algorithmic}
\end{algorithm}
\end{minipage}
\end{center}

\begin{figure}[t]
  \centering 
\includegraphics[width=0.9\textwidth]{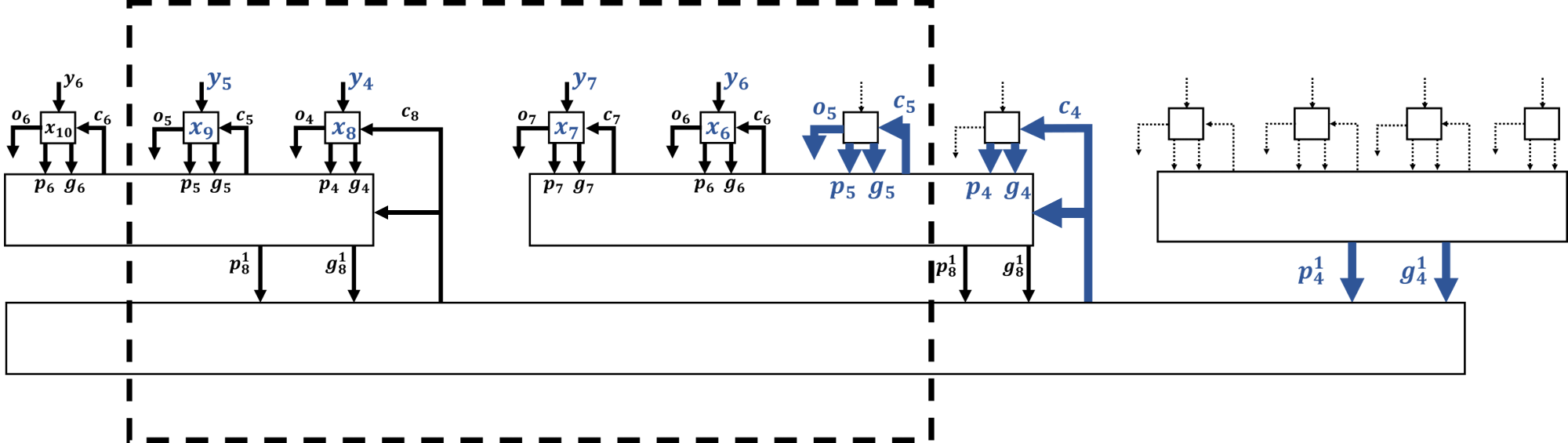}
\caption{An intermediate state in the CLA after scanning up to the sixth column. The box contains the columns of the critical strip. The blue variables are assigned while the blank variables were previously assigned, but then erased. Notice that we remember the assignment to the output variables in the strip and forgot the assignment outside.}
\label{cla_strip}
\end{figure}

After running Algorithm~\ref{CLA_alg} in both circuits $L$ and $R$, we have an assignment to the outputs of both critical strips. By Lemma~\ref{wallace_unsat}, this assignment violates an inequality-constraint in $E$.

\paragraph{Size Bound:}
We claim that in the first phase, where the branching program $B$ is executing Algorithm~\ref{wallace_alg1}, each node in $B$ is labeled by an assignment to at most four rows of tableau variables within each layer $\ell$ of $L$, and likewise for each layer $\ell>1$ for $R$. By Lemma~\ref{wallace_lemma}, the tableau variables within each layer are assigned in row order from top to bottom in $L$. So if four rows are assigned in a layer $\ell$, they form a fully assigned row of adders $\{A^{xy}_{0,i,j}\}_i$. Algorithm~\ref{wallace_alg1} will propagate that assignment to the next layer, erasing the assignment to the row of adders $\{A^{xy}_{0,i,j}\}_i$. The same proof works to show that at most four rows of tableau variables are assigned within each layer $\ell >1$ of $R$.

Each node in the first phase of $B$ then holds an assignment to at most $8 \delta h$ variables of the critical strip. Both $L$ and $R$ have at most $2n$ rows of tableau variables, so the number of tableau variables in the critical strip is upper bounded by $4nh$. Therefore the execution of Algorithm~\ref{wallace_alg1} will take at most $4nh$ steps. As this algorithm is also oblivious, each node gets labeled by an assignment to one of $4nh$ sets of at most $8 \delta h$ tableau variables. So the total number of nodes in the first phase of $B$ is at most $4nh 2^{\delta h} = 2^{16 \log ^2 n + O(\log n)}$.

We can obtain a more efficient version of Algorithm~\ref{wallace_alg1} by immediately propagating when an individual adder becomes fully assigned. This modified algorithm will only store at most two variables per subcolumn, except for a single "working" subcolumn in each layer that may hold three variables. This modification results in a size bound of $2^{8 \log^2 n + O(\log n)}$.

We give a polynomial bound for the second phase, where the branching program $B$ is executing Algorithm~\ref{CLA_alg}. Observe that this algorithm only keeps an assignment to variables within the sub-CLAs intersecting the $i$-th column. At most one sub-CLA in each of the $\log_4 n$ layers will intersect the $i$-th column, so there are $O(\log n)$ assigned variables in any step of Algorithm~\ref{CLA_alg}. The whole CLA has $O(n)$ variables, therefore $B$ uses a polynomial number of nodes to execute Algorithm~\ref{CLA_alg}.

The total size of the branching program $B$ is then $2^{8 \log^2 n + O(\log n)}$.
\end{proof}

\begin{theorem}
There is a regular resolution proof of size $2^{8 \log^2 n + O(\log n)}$ that $\phi^{\mathrm {Wall}}_{\mathrm{Comm}}(n)$ is unsatisfiable
\end{theorem}

\begin{proof}
As usual, we initially branch on the assignments $\sigma_e(k) = \{e_0=0,e_1=0,\ldots e_k=1\}$ for $k \in [0,2n-1]$. The $k$-th branch contains the clauses $\strip(k)$ so we can use the Read-Once branching program from Lemma~\ref{thm:walStrip} (with each node augmented with the assignment $\sigma_e(k)$) to show that the branch is unsatisfiable.
\end{proof}

\subsection{Proofs of Wallace Tree Multiplier Distributivity}

Our proof of commutativity for Wallace tree multipliers used Algorithms~\ref{wallace_alg1} and \ref{CLA_alg} to efficiently propagate an assignment from the initial layer of $L$'s critical strip to the outputs. We will modify the branching step in these algorithms to verify the distributivity of Wallace tree multipliers.

\begin{definition}
Define a SAT instance $\phi^{\mathrm {Wall}}_{\mathrm{Dist}}(n)$ encoding the identity $x(y+z) = xy+xz$ in the usual way, with subcircuits $L^{y+z},L^{x(y+z)}$ forming circuit $L$, $R^{xy},R^{xz},R^{xy+xz}$ forming circuit $R$ and inequality-constraints $E$.
\end{definition}

\begin{theorem}
There is a regular resolution proof of size $2^{O(\log^2 n)}$ that $\phi^{\mathrm {Wall}}_{\mathrm{Dist}}(n)$ is unsatisfiable
\end{theorem}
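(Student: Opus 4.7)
The plan is to combine the critical-strip framework from the array distributivity proof (Lemma~\ref{thm:distStrip}) with the row-by-row Wallace tree propagation algorithm from the commutativity proof (Lemma~\ref{thm:walStrip}). First, define a strip $\phi_{\textrm{Strip}}(k)\subset \phi^{\mathrm{Wall}}_{\mathrm{Dist}}(n)$ with width $\delta = \log n + \log\log n + O(1)$, containing the full ripple-carry adder $L_{y+z}$, all constraints involving tableau variables $t^{C}_{l,i,j}$ from the three Wallace multipliers $C\in\{L_{x(y+z)},R_{xy},R_{xz}\}$ with $i\in[k-\delta,k]$, the final CLA of each Wallace multiplier, the carry/sum bits of $R_{xy+xz}$ in this column range, and the unit clauses $e_0=0,\ldots,e_{k-1}=0,e_k=1$.

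Next I would prove $\phi_{\textrm{Strip}}(k)$ is unsatisfiable by a weight argument analogous to those in Lemmas~\ref{unsat_dist} and~\ref{wallace_unsat}. Each of the three Wallace strips has fewer than $(2n/3)\log n$ uncontrolled carry bits of weight $2^{k-\delta-1}$ crossing in from the removed columns, each of the final CLAs contributes at most $2^{k-\delta-1}$ of slack, and $R_{xy+xz}$ contributes a single input carry of weight $2^{k-\delta}$. With $\delta$ chosen as above, the total unaccounted weight in $|L - R|$ is strictly less than $2^k$, contradicting the assignment to $\mathbf{e}$.

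The main technical step is the refutation of $\phi_{\textrm{Strip}}(k)$ by a read-once branching program. I would read the input bits of $\mathbf{x},\mathbf{y},\mathbf{z}$ in an order that drives all three Wallace tableaux simultaneously row-by-row: for each $j$, branch on the relevant bits $y_j,z_j$ and on a sliding window of bits from $\mathbf{x}$ needed to populate row $j$ in all three Wallace tableaux. While processing row $j$, propagate through $L_{y+z}$ by branching on (and then forgetting) the carry $c^{L_{y+z}}_{j-1}$ to produce $(y+z)_j$, which then feeds the $j$-th row of $L_{x(y+z)}$'s initial tableau. Once the partial products are assigned, invoke the PropagatePair/PropagateAdder subroutines of Algorithms~\ref{alg:propagatePair} and~\ref{alg:propagateAdder} in parallel on each of the three Wallace multipliers, using the row-friendly property (Lemma~\ref{thm:nice_adders}) to argue that the cut of each strip never holds more than a constant number of variables per subcolumn. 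Finally, propagate through each final CLA using the same technique as in Lemma~\ref{thm:walStrip}, carrying only $O(\log n)$ group-propagate/generate and carry variables at a time, and through the shallow $R_{xy+xz}$ ripple-carry adder by maintaining a single carry variable. At the end, the resulting output assignments are consistent with one assignment to the strip inputs, so by the previous step they must violate some inequality-constraint in $E$.

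The main obstacle is bounding the combined cut size. Each Wallace strip contributes $O(\log^2 n)$ variables (at most five per subcolumn over $O(\log n)$ layers within $O(\log n)$ columns), and there are three Wallace strips plus a ripple-carry adder carrying a single bit and a ripple-carry adder strip of width $\delta$. The delicate point is ensuring the ordering is still read-once across all five subcircuits: because $\mathbf{y},\mathbf{z}$ are shared between $L_{y+z}$ and two of the Wallace multipliers, and the sliding window of $\mathbf{x}$ must be consistent with what each Wallace multiplier needs at row $j$, one must argue (as in the proof of Theorem~\ref{thm:deg2array}) that the same row-major order of inputs is compatible with all of the parallel Wallace processes. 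Given this, each node is labeled by an assignment to $O(\log^2 n)$ variables drawn from an $O(n^2)$-variable vocabulary, giving $2^{O(\log^2 n)}$ nodes per strip. Branching at the root on the $2n$ strip indices $\sigma_e(k)$ and invoking the above refutation on each branch yields a regular resolution proof of total size $2n \cdot 2^{O(\log^2 n)} = 2^{O(\log^2 n)}$, as required.
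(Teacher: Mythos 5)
Your proposal is correct and follows essentially the same route as the paper: the same critical-strip decomposition with width $\delta = \Theta(\log n + \log\log n)$, the same carry-counting argument for unsatisfiability of each strip, the same row-by-row revelation of the three Wallace tableaux in parallel (with a sliding window on $\mathbf{x}$ and on-the-fly propagation through $L_{y+z}$), the same use of row-friendliness to bound the number of live variables per subcolumn by a constant, and the same CLA propagation trick, giving $O(\log^2 n)$-size cuts and hence $2^{O(\log^2 n)}$ nodes per strip. The only cosmetic difference is that you invoke PropagatePair, which in the paper is specific to the shared $L$/$R$ tableau in the commutativity proof; for distributivity the three multipliers have disjoint tableaux and only PropagateAdder-style propagation is needed, which is what the paper's sketch uses.
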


\begin{proof}
(Sketch) We sketch the proofs for distributivity as they are simpler than the proofs for commutativity. The main difference is that we branch on the input variables $\mathbf{x,y,z}$ rather than the tableau variables in the initial layer.

We define critical strips in the usual way for each multiplier. There are at most $n+2$ unconstrained carry bits in the $n+1$-bit multiplier $L^{x(y+z)}$ and one unconstrained carry bit from the adder $L^{y+z}$ for $n+3$ total in $L$'s critical strip. Together, the two $n$-bit multipliers $R^{xy}, R^{xz}$ have $2n+2$ unconstrained carry bits. The adder $R^{xy+xz}$ contributes one more for a total of $2n+3$ unconstrained carry bits in $R$'s critical strip. So if our critical strip has width $\delta = \log (2n+4)$, it will be unsatisfiable.

We now describe a branching program $B$ that proves a given critical strip $\phi_{\textrm{Strip}}(k)$ is unsatisfiable. We begin the branching program $B$ by running Algorithm~\ref{wallace_alg1} with the following modification: instead of branching on a row of initial tableau variables in some multiplier $\{ t_{0,i,j} \}_i$, branching program $B$ will instead branch on the input variables $\mathbf{x,y,z}$ and propagate to that row of tableau variables $\{ t_{0,i,j} \}_i$. To reveal the rows from top to bottom in the initial layer of each multiplier's critical strip, we only need to assign a sliding window of $\delta$ bits in each input bitvector $\mathbf{x,y,z}$. The resulting branch order on $\mathbf{x,y,z}$ is the same as in our proof of array multiplier distributivity.

At the end of Algorithm~\ref{wallace_alg1}, the branching program $B$ reaches an assignment to the last layer of each multiplier $R^{xy}, R^{xz}, L^{x(y+z)}$. By using Algorithm~\ref{CLA_alg}, we propagate this assignment to the multiplier outputs $\mathbf{xy,xz}$ and $\mathbf{x(y+z)}$. Lastly, we propagate from $\mathbf{xy,xz}$, through the CLA circuit $L^{xy+xz}$, to the final output $\mathbf{xy+xz}$. Since the critical strip was unsatisfiable, the resulting assignment to $\mathbf{x(y+z)}$ and $\mathbf{xy+xz}$ must violate some equality-constraint from $E$.
\end{proof}

\subsection{Degree Two Identity Proofs for Wallace Tree Multipliers}

Using the same ordering on the input variables and ideas from the proof of Theorem~\ref{thm:deg2array}, we can prove the analogous result for Wallace tree multipliers.

\begin{theorem}
For any degree two ring identity $L=R$, there are quasipolynomial size regular refutations for $\phi^{\mathrm{Wall}}_{L=R}(n)$.
\end{theorem}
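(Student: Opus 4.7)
(Sketch)
The plan is to combine the meet-in-the-middle input ordering from Theorem~\ref{thm:deg2array} with the row-friendly subcolumn analysis used in the proofs of commutativity and distributivity for Wallace tree multipliers. Let $m$ denote the number of terms in the identity $L=R$. First, I would decompose $\phi^{\mathrm{Wall}}_{L=R}(n)$ into $O(n)$ critical strips indexed by the least significant disagreeing output bit, each of width $\delta = O(\log n + \log\log n + \log m)$. A counting argument analogous to Lemma~\ref{wallace_unsat} and Lemma~\ref{unsat_dist} shows each strip is unsatisfiable: each $n$-bit Wallace tree multiplier contributes at most $2n/3 \log n$ unconstrained carry bits on the least-indexed column of its strip, and its terminating CLA contributes at most one unit of weight $2^{k-\delta}$; summing over the $O(m)$ multipliers in the identity and choosing $\delta$ appropriately ensures the total weight on the right of the strip is less than $2^k$.

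Next, I would construct a read-once branching program for each strip. Following Theorem~\ref{thm:deg2array}, the input bitvectors are read in the symmetric order $x_0, x_n, x_1, x_{n-1}, \ldots$, in parallel across all inputs. This symmetric order is compatible both with swapping the inputs to any multiplier (as needed for commutativity-like subcomponents) and with squaring a single input (as in $x^2$), since reading $x_i$ and $x_{n-i}$ together reveals both the top-$j$-th and bottom-$j$-th rows of any tableau whose inputs involve $x$. Inside each multiplier, revealing inputs in this order causes the zero-th layer adders in the critical strip to become fully assigned from both the top and bottom simultaneously, meeting in the middle. At that point I would invoke the PropagatePair/PropagateAdder routines from Lemma~\ref{thm:walStrip} to cascade assignments through the subsequent layers of each multiplier, and the CLA-propagation procedure to finish through the terminating adders. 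For the intact ripple-carry adders feeding each multiplier, I would maintain $O(\delta)$ cut variables as in Theorem~\ref{thm:deg2array}.

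The size bound follows from showing that at every node of the branching program, each subcolumn of each multiplier holds an assignment to only $O(1)$ variables. This is where row-friendliness (Lemma~\ref{thm:nice_adders}) is essential: since the zero-th layer adders of each multiplier become fully assigned in order of increasing row index from the top, and symmetrically in order of decreasing row index from the bottom, and row-friendliness ensures the $j$-th row of adders at any layer $l$ feeds only rows $2j,2j+1$ of layer $l+1$, each subcolumn fills in monotone order from its two ends inward, allowing us to propagate and forget intermediate variables as we go. Across all $O(m)$ multipliers, the total cut has $O(m \cdot \log n \cdot \delta) = O(\log^2 n)$ variables (treating $m$ as a constant), plus the $O(\log n)$ variables maintained in each CLA's running state. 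The number of nodes is at most $O(n^2) \cdot 2^{O(\log^2 n)} = 2^{O(\log^2 n)}$ per strip, and summing over the $O(n)$ strips yields a regular resolution proof of quasipolynomial size.

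The main obstacle is verifying that the symmetric meet-in-the-middle input order is simultaneously compatible with every multiplier appearing in an arbitrary degree two identity: identities may contain terms like $xy$, $yx$, $x^2$, $x(y+z)$, and ripple-carry adders interleaved in various ways, and we must ensure that for each such multiplier the order reveals rows from both the top and bottom in lockstep so that row-friendliness keeps subcolumn cuts $O(1)$. Since any degree two term depends on at most two distinct input bitvectors and the chosen ordering is symmetric under swapping the two inputs and under identifying them, this works for every degree two term; the remaining book-keeping is routine and mirrors the array multiplier case.
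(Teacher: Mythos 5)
Your sketch is correct and follows essentially the same route the paper intends: the paper offers no detailed proof here, only the remark that one combines the symmetric meet-in-the-middle input ordering of Theorem~\ref{thm:deg2array} with the Wallace-tree critical-strip machinery (row-friendliness, PropagatePair/PropagateAdder, and the CLA propagation) from the commutativity and distributivity proofs, which is exactly what you do. Your elaboration of why subcolumns fill monotonically from both ends inward under the symmetric ordering is a faithful and slightly more explicit version of the argument already implicit in the paper's treatment of circuit $R$ in the commutativity proof.
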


\update{

\section{Proving Equivalence Between Multipliers}

Given any two $n$-bit multiplier circuits $\otimes_{1}$
and $\otimes_{2}$ we can define a Boolean formula $\phi_{\otimes_{1}=\otimes_{2}}$ encoding the negation of the identity
$\mathbf{x} \otimes_{1} \mathbf{y} = \mathbf{x} \otimes_{2} \mathbf{y}$ between length $n$ bitvectors $\mathbf{x}$ and $\mathbf{y}$.

If both $\otimes_{1}$ and $\otimes_{2}$ are correct and compute using the typical tableau for 
multipliers then, as before, we can split $\phi_{\otimes_{1}=\otimes_{2}}$
into unsatisfiable critical strips. 
We can scan down both strips row-by-row, as in the proofs for commutativity and distributivity. If we have reached the outputs of both multipliers without finding an error, these outputs will disagree with the inequality-constraints for the critical strip.
For our examples this method yields polynomial-size proofs if neither is a
Wallace tree multiplier, and quasi-polynomial size proofs otherwise.

On the other hand, if one multiplier is incorrect and the other is not, then
the proof search will yield a satisfying assignment in the appropriate
critical strip.

In the more general case where a multiplier does not use
the typical tableau, one can label each internal gate by the index of the
smallest output bit to which it is connected and focus on comparing subcircuits
labeled by $O(\log n)$ consecutive output bits, as we do with critical
strips.  The complexity of this equivalence checking will depend somewhat on
the similarity of the circuits involved.
}

\section{Discussion}

Despite significant advances in SAT solvers, one of their key persisting
weaknesses has been in verifying arithmetic circuits containing multipliers.
This pointed towards the conjecture that that the corresponding resolution proofs
are exponentially large; if true, this would have been a fundamental obstacle
putting nonlinear arithmetic out of reach for any CDCL SAT solver.

Thus, much of the recent research on multiplier verification has focused on
using algebraic reasoning, in particular Groebner basis methods. 
The recent work of Ritirc, Biere, and Kauers~\cite{bkr17,DBLP:conf/date/RitircBK18} has improved the Groebner basis approach by dividing
 a multiplier into columns, and then incrementally checking that each column receives
 and transmits its carry-bits correctly. They find that this incremental
  method allows off-the-shelf computer algebra software to verify
  "simple" multiplier designs of up to 64 bits, though "optimized" multipliers still pose some difficulty.

We have shown that the conjectured resolution proof size barrier does not
 hold by giving the first small
resolution proofs for verifying any degree two ring identity for the most
common multiplier designs. 
We introduced a method of dividing each instance into narrow, but still
unsatisfiable, critical strips that is sufficiently general to yield 
short proofs for a wide variety of popular multiplier designs. 
In light of our results and~\cite{bkr17,DBLP:conf/date/RitircBK18}, it seems that for verifying multipliers at
 the bit-level, the column-wise view is most natural.
 This is in contrast to the row-wise view taken, for example, 
 in verifying multipliers at the word level.
We remark that the critical strip decomposition is not only useful
 in the domain of resolution proofs. Other verification methods may 
 find critical strips a useful testing ground, or could even benefit from 
 checking each strip instead of the full multiplier all at once. 

 Given the historical success of CDCL SAT solvers for
finding specific proofs, our results suggest a new path towards verifying
nonlinear arithmetic. The proof size
upper bounds we derived were conservative; we did not try to optimize the
parameters. Nevertheless, the observed scaling of SAT solver performance
 on these problems suggests that they do not currently find proofs matching
  even these upper bounds. An important direction for improving SAT solvers 
  is to find the right guiding
information to add, either to the formulas derived from the circuits or to
CDCL SAT solver heuristics, to help them find shorter proofs.

It also remains open to find a small resolution proof verifying the last ring
property, associativity $(xy)z =x(yz)$. 
Our critical strip idea alone does not seem to work:  while we can divide the
outer multipliers into narrow critical strips, the $yz$ or $xy$ multipliers
remain intact. 
These critical strips do not seem to have small cuts.
Finding efficient proofs of associativity, combined with our results for degree
two identities, could yield small proofs of any general ring identity.
 
\bibliographystyle{plain}
\bibliography{multiply} 
\end{document}